\def\Si{\Sigma}
\def\si{\sigma}
\def\lb{\lambda}
\def\Op{\mathfrak{Op}}
\def\hb{\hbar}
\def\H{\mathcal{H}}
\def\de{\mathrm{d}}
\def\A{{\mathcal A}}
\def\B{{\mathrm B}}
\def\R{\mathbb R}
\def\C{\mathbb C}
\def\sp{\mathop{\mathrm{sp}}\nolimits}
\newtheorem{proposition}{Proposition}[section]
\newtheorem{Theorem}{Theorem}[section]
\newtheorem{Remark}{Remark}[section]
\newtheorem{Definition}{Definition}[section]
\newtheorem{Lemma}{Lemma}[section]
\newtheorem{Corollary}{Corollary}[section]
\providecommand{\keywords}[1]
{
  \small	
  \textbf{\textit{Keywords---}} #1
}
\begin{document}
\title{Constants of Motion of the Harmonic Oscillator.}

\date{\empty}
\author{Fabi\'an Belmonte \\ Universidad Católica del Norte \\\texttt{fbelmonte@ucn.cl}\and Sebasti\'an Cuéllar \\Universidad Católica del Norte \\ \texttt{sebastian.cuellar01@ucn.cl}}
\maketitle
\textbf{Address}: Departamento de Matem\'aticas, Universidad Cat\'olica del Norte, Angamos 0610, Antofagasta, Chile;
\vspace{1cm}
\begin{abstract}
We prove that Weyl quantization preserves constant of motion of the Harmonic Oscillator. We also prove that if $f$ is a classical constant of motion and $\Op(f)$ is the corresponding operator, then $\Op(f)$ maps the Schwartz class into itself and it defines an essentially selfadjoint operator on $L^2(\R^n)$. As a consequence, we provide detailed spectral information of $\Op(f)$. A complete characterization of the classical constants of motion of the Harmonic Oscillator is given and we also show that they form an algebra with the Moyal product. We give some interesting examples and we analize Weinstein average method within our framework.
\end{abstract}
\keywords{Constant of motion, Harmonic Oscillator, Weyl Calculus, Moyal Product.}

\section{Introduction.}
Let $h_0$ be a classical Hamiltonian. A classical constant of motion for $h_0$ is a classical observable $f$ such that $\{h_0,f\}=0$, where $\{\cdot,\cdot\}$ is the Poisson bracket on the classical phase space. Similarly, if $H_0$ is a quantum Hamiltonian, a quantum constant of motion for $H_0$ is a quantum observable $F$ such that $[H_0,F]=0$, which means that the selfadjoint operators $H_0$ and $F$ strongly commute. Equivalently, a constant of motion is an observable that is invariant by the evolution of the system given by the Hamiltionian. Moreover, classical and quantum constants of motion admit decompositions through the reduction and diagonalization processes respectively (see Subsection \ref{com}). The analogy between classical and quantum constant of motion suggests that a canonical quantization $\Op$ should map classical constants of motion of $h_0$ into quantum constants of motion of $H_0=\Op(h_0)$. In such case we say $\Op$ preserves constants of motion of $h_0$. The initial motivation of this article is to address the latter problem when $h_0$ is the classical Harmonic Oscillator and $H_0$ is the quantum Harmonic Oscillator. \\ \\
Let $\mathfrak{Op}$ be the canonical Weyl quantization. The celebrated Groenewold-Van Hove's no go Theorem (see \cite{Gro}) implies that $\Op$ does not interchange the Poisson bracket of classical observables with the commutator of the corresponding operators. Therefore we should not expect that $\mathfrak{Op}$ preserves constants of motion for every $h_0$. However, it is not difficult to argue that the latter property might hold in some examples. As we mentioned before we will prove that the Harmonic Oscillator is one of those examples, i.e. \textit{$\Op$ preserves constants of motion of the Harmonic Oscillator}. We combine the latter result with the $N-$representation Theorem to obtain a large number of further consequences concerning both classical and quantum constants of motion. \\ \\
In Section \ref{Prel}, we sumarize some known facts concerning constants of motion and the classical and quantum Harmonic Oscillator. We also recall the average method of Weinstein for constructing constants of motion \cite{Wei}. \\ \\ In Section \ref{PresCons}, we prove that canonical Weyl quantization preserves constants of motion of the Harmonic Oscillator (Theorem \ref{PrCOM}). Let $f$ be a classical constant of motion of the Harmonic Oscillator. Using the $N-$representation Theorem, we prove that $\Op(f)$ maps the Schwartz class $S(\R^n)$ into itself (Theorem \ref{StoS}) Note that such result is usually obtained by imposing strong conditions on the symbol $f$, for instance belonging to a Hörmander class  (\cite{Horm}, \cite{Fol}, \cite{Ta}). Moreover we show that $\Op(f)$ is essentially selfadjoint on the Schwartz class $S(\R^n)$. Again, such result is usually obtained imposing strong conditions on $f$, for instance $\Op(f)$ is essentially selfadjoint if $f$ is hypoelliptic (Theorem 26.2 of \cite{Shu}). Since constants of motion are decomposed through the spectral diagonalization of the Hamiltonian, we  obtain detailed spectral information of $\Op(f)$ (Theorem \ref{selfcm}). We also provide an interesting criteria for boundedness of $\Op(f)$ (Proposition \ref{bounded}). Concerning the construction of constants of motion, we prove that Weyl quantization exchange Weinstein average of observables (Corollary \ref{wcom}). In addition, we show that the metaplectic representation maps the unitary group $U(n)$ into unitary quantum constants of motion, so we can also represent the Lie algebra $\mathfrak{u}(n)$ as quantum constants of motion Corollary \ref{metdecom} and \ref{metdecom1} (moreover, those operators are also the quantization of some classical constants of motion as well). \\ \\
In Section \ref{char}, we give a complete characterization of classical constants of motion. More precisely, we proof that every tempered constant of motion is of the form 
$$f=\displaystyle\sum_{|\alpha|=|\beta|}c_{\alpha,\beta} W(\phi_\beta, \phi_\alpha),$$
where $\alpha,\beta\in\mathbb{N}^{n}$, $\phi_\alpha$ and $\phi_\beta$ are the correspondig Hermite functions and $W$ is the Wigner transform. Depending on the behavior of the constants $c_{\alpha,\beta}$, the corresponding series will converges in different spaces ($S(\R^{2n})$, $L^2(\R^{2n})$ or $S'(\R^{2n})$) and moreover $c_{\alpha,\beta}=\langle\Op(f)\phi_\alpha,\phi_\beta\rangle$ (Theorem \ref{CCM_Th} and Proposition \ref{CCM_Pr}). We also use our techniques to give explicit expressions of the quantum and classical Weinstein averages (Propositions \ref{WeinsteinC} and \ref{WeinsteinQ}). In Subsection \ref{MSUB} we discuss some well known facts regarding the Moyal product. For example it is not always well defined when the two factors are tempered distributions. In order to avoid that problem the so called multiplier algebra  (see \cite{EGV}) was introduced. We prove that the set of tempered constants of motion form a subalgebra of the multiplier algebra and in particular the Moyal product between two tempered constants of motion is always well defined (Theorem \ref{MoyalMainT} and Corollary \ref{MoyalCor}). \\ \\
In Section 5, we give two explicit formulas to compute the Wigner transform $W(\phi_\alpha,\phi_\beta)$ and the coefficients $\langle\Op(f)\phi_\alpha,\phi_\beta\rangle$ (Theorem \ref{W_Form} and Theorem \ref{IntForm}). \\ \\
Finally, we include an appendix containing known facts concerning Weyl quantization that are needed for the comprehension of this article. 
\section{Preliminaries.}\label{Prel}
In this Section, we summarize some general facts concerning constants of motion and the classical and quantum Harmonic Oscillator. First let us fix some notation:
\begin{itemize}
\item $S(\R^m)$ is the Schwartz space of rapidly decreasing smooth functions on $\R^m$ with its canonical Fréchet space topology.
\item $S'(\R^m)$ denote the topological dual of $S(\R^m)$ known as the space of tempered distributions.
\item We endow $L^{2}(\R^m)$ with its cannonical inner product, linear on the left, and for $T\in S'(\R^m)$ and $\phi\in S(\R^m)$ we use the notation $\langle T,\phi\rangle=T(\overline{\phi})$.
\item If $H$ is a linear operator, we denote its domain by $D(H)$. 
\end{itemize}
\subsection{Constants of Motion.}\label{com}

In this Subsection we recall the concept of (classical and quantum) constant of motion and some of their properties. We shall emphasize the analogy between their classical and quantum description.

Let us fix a complete Hamiltonian $h_0\in C^\infty(\R^{2n})$ and denote by $\varphi_t$ its corresponding Hamiltonian flow.
We say that $f\in C^\infty(\R^{2n})$ is a classical constant of motion for $h_0$ if $\{h_0,f\} = 0$, where $\{\cdot,\cdot\}$ denotes the Poisson bracket corresponding to the canonical symplectic structure on $\R^{2n}$. Leibniz's rule and Jacobi identity show that the set $\A$ of all constants of motion is a Poisson subalgebra of $C^\infty(\R^{2n})$. It is easy to show that $f$ belongs to $\A$ if and only if $f\circ\varphi_t=f$, for each $t\in\R$. \\ \\
An important construction for the analysis of constants of motion is the following: Let $\lb$ be a regular value of $h_0$. Thus, the constant energy level set $\hat{\Si}_\lb:=h_0^{-1}(\lb)\subseteq \R^{2n}$ is a $(2n-1)$-submanifold invariant under $\varphi$. Moreover, considering $h_0$ as an equivariant moment map, the orbits space $\Si_\lb:=\hat{\Si}_\lb/\varphi$ can be endowed with the symplectic form given by Marsden-Weinstein-Meyer reduction \cite{AM,MW,Me,M} (our particular case is sometimes called Jacobi-Liouville theorem).

Since $f\in\A$ is constant on each orbit, we can associate to $f$ the field of smooth functions $f_\lb\in C^\infty(\Si_\lb)$, defined by

\begin{equation}\label{lb}
f_\lb([x,\xi])=f(x,\xi),
\end{equation}

where $(x,\xi)\in\hat\Si_\lb$ is any element in the orbit $[x,\xi]$.\\ \\
Clearly, if $a\in C^\infty(\R)$ and $f$ is a constant of motion, then $a\circ f$ is also a constant of motion. In particular, $a\circ h$ is a constant of motion for any $a\in C^\infty(\R)$. \\ \\
Another way to construct constants of motion is the following: Let $f\in C^\infty(\R^{2n})$, under certain assumptions, the function $\tilde f$ given by    
\begin{equation}\label{CA}
\tilde f(x,\xi)=\int_{\R} f\circ\varphi_t(x,\xi)\de t
\end{equation}
defines a classical constant of motion. We shall not provide general conditions to guarantee that $\tilde f$ is well defined for a generic $h_0$, instead we are going to give the details when $h_0$ is the Harmonic Oscillator in Subsection \ref{HO}.

Let us pass to the quantum description of constants of motion. Fix a selfadjoint operator $H_0$ on $L^2(\R^n)$ (a quantum Hamiltonian). A quantum observable, i.e. a selfadjoint operator $F$, is a quantum constant of motion for $H_0$, if $F$ strongly commutes with $H_0$. If $F$ is a bounded selfadjoint operator, then $F$ is a constant of motion iff $e^{itH_0} F e^{-itH_0}=F$, for every $t\in\R$. Moreover, the set of bounded constants of motion is the selfadjoint part of a von Neumann algebra (the commutant of $H_0$).   \\ \\
In analogy with the classical framework, the following construction is usefull to describe quantum constant of motion (see \cite{D} or \cite{BS} for details). Let $\si (H_0)$ the spectrum of $H_0$. There is a unique Borel measure $\eta$ (up to equivalence) on $\si (H_0)$, a unique measurable field of Hilbert spaces $\{\H(\lb)\}_{\lb\in\si (H_0)}$ (up to unitary equivalence on $\eta$-almost every fiber), and a unitary operator $T:L^2(\R^n)\to\int_{\sp(H_0)}^{\oplus}\H(\lambda)\de\eta(\lambda)$ such that
$$
[T a(H_0) u](\lambda)=a(\lb)(Tu)(\lambda)\,\,\,\forall u\in \text{Dom}(a(H_0)),
$$
where $a$ is any Borel function on $\si(H_0)$ and $a(H_0)$ denotes the corresponding operator given by the functional calculus. $T$ is called the diagonalization of $H_0$. In particular we have that
$$
[TH_0 u](\lambda)=\lambda(Tu)(\lambda)\,\,\,\forall u\in \text{Dom}(H_0).
$$
and
$$
[Te^{itH_0} u](\lambda)=e^{it\lb}(Tu)(\lambda)\,\,\,\forall u\in \H, t\in\R.
$$
Notice the similarities between the construction of $\H(\lb)$ and $\Si_\lb$: Both constructions are meant to make the Hamiltonian constant on each fiber and the corresponding dynamics trivial.\\ \\
It is well known that $F$ is a constant of motion if and only if it admits a decomposition through $T$ \cite{D,BS}, i.e. there is a measurable field of selfadjoint operators $\{ F_\lb\}_{\lb\in\si(H_0)}$ such that $[TFu](\lb)=F_\lb[Tu(\lb)]$. Such field of operators is the quantum counterpart of the field of classical observables $f_\lb$ defined in \eqref{lb}. \\ \\   
It is easy to prove that, if $F$ is a quantum constant of motion, so is $a(F)$ and $a(F)_\lb=a(F_\lb)$, for any Borel function $a$ on $\R$. \\ \\
As in the classical framework, we can construct constants of motion as follows: Given a selfadjoint operator $F$ on $L^2(\R^n)$, under certain conditions, we can define the quantum constant of motion $\tilde F$ given by
\begin{equation}\label{QA}
\tilde F=\int_{\R} e^{itH_0}Fe^{-itH_0}\de t.
\end{equation}
Just as in the classical case, we shall not give general conditions under which $\tilde F$ is a well defined operator, we will only explain in detail the case when $H_0$ is the quantum Harmonic Oscillator. 

The classical and quantum average trick above have been successfully used to study the asymptotic behaviour of clusters of eigenvalues for Schrodinger operators on compact Riemannian manifolds, for instance see \cite{Wei, G}. \\ \\
Using similar ideas, in \cite{DGW} it was shown that $\text{tr}(e^{-it(H_0+P)})$ has the same singularities than $\text{tr}(e^{-itH_0})$, where $P$ is an isotropic pseudodifferential operator of order $1$. We expect that our results concerning the constant of motion of the Harmonic Oscillator can be useful for further developments on the latter research topic.  
  

\subsection{The Harmonic Oscillator.}\label{HO}
This Subsection is meant to summarize some of the well known facts concerning the Harmonic Oscillator that we are going to need later. \\ \\ The classical Harmonic Oscillator is the physical system with hamiltonian given by $h_0(x,\xi)=\frac{1}{2}(\|x\|^2+\|\xi\|^2)$. It is easy to check that the flow of $h_0$ is given by
$$\varphi_t (x,\xi)=\begin{bmatrix}
(\cos t)I&(\sin t)I\\
(-\sin t)I&(\cos t)I
\end{bmatrix}\begin{bmatrix}
x\\
\xi
\end{bmatrix}$$
The flow $\varphi_t$ is a linear simplectomorphism. In particular, it preserves volume and its pullback maps $S(\R^n)$ into itself, so we can extend the pullback to $S'(\R^n)$. 
\begin{Definition}\label{TCOM}
For each $T\in S'(\mathbb{R}^{2n})$, we define the distribution $\varphi^*_t T$ by  
$$
(\varphi^*_t T)(f)=T(f\circ \varphi_t),
$$ 
where $f$ is any rapidly decreasing function on $\R^{2n}$. If $\varphi^*_t T=T$, we say that $T$ is tempered constant of motion of the Harmonic Oscillator.
\end{Definition}
With the identification of $(x,\xi)\in \mathbb{R}^{2n}$ with $x+i\xi\in\mathbb{C}^{n}$, the flow admits the representation 
$$
\varphi_t(x+i\xi)=e^{-it}(x+i\xi).
$$
If $\lambda$ is a regular value for $h_0$, that is $\lambda\neq 0$, then the submanifolds $h_0^{-1}(\lambda)$ are precisely the spheres $\mathbb{S}_{\sqrt{2\lambda}}^{2n-1}$. Hence the orbit space $\Sigma_{\lambda}=\mathbb{S}_{\sqrt{2\lambda}}^{2n-1} / \varphi$ is isomorphic as a manifold to the proyective plane $\C \mathbb{P}^{n-1}$. \\ \\
Since the flow is $2\pi$-periodic, we can replace it by the corresponding action of the group $\mathbb{T}$. In particular, we have no technical problems defining the average $\tilde{f}$, as in Subsection \ref{com}, by the Bochner integral
$$
\tilde{f}=\frac{1}{2\pi}\int_{\mathbb{T}}\varphi^*_t(f)\ dt,
$$
where $f$ is any tempered distribution and $\mathbb{T}$ is the circle group.  \\ \\
There is a family of polynomial constants of motion of particular interest in the present study. Set $z=x+i\xi=(x_1+i\xi_1,\dots,x_n+i\xi_n)=(z_1,\cdots,z_n)$ and $\alpha, \beta \in \mathbb{N}^{n}$ n-tuples of nonegative integers with $|\alpha|=|\beta|$. The monomial 
\begin{equation}\label{conmon}
m_{\alpha,\beta}(z)=z^{\alpha}\bar{z}^{\beta}
\end{equation}
is a constant of motion of the classical Harmonic Oscillator. For example $h_0$ itself and the coordinates of the angular momentum are a linear combination of some of the monomials $m_{\alpha,\beta}$. The set $$\{m_{\alpha,\beta}: |\alpha|=|\beta|=k\}$$ has exactly $d_k^2=\binom{n+k-1}{k}^2$ elements. We will show in Theorems \ref{CCM_Th} and \ref{W_Form} that the functions of this monomials generate the set of all classical constants of motion for the Harmonic Oscillator.\\ \\
The $n$ dimensional quantum Harmonic Oscillator is the self adjoint $H_0=\frac{1}{2}(-\Delta+\|x\|^{2})$ defined in a subspace of $L^2(\mathbb{R}^{n})$. The spectral decomposition of $H_0$ is well known: The spectrum  of $H_0$ is formed by degenerate eigenvalues of the form $k+\frac{n}{2}$, with $k$ any nonnegative integer, and the corresponding eigenspace $\mathcal{H}_k$ has dimension $d_k=\binom{n+k-1}{k}$. We shall denote by $P_k$ the orthogonal projection on $\mathcal{H}_k$. \\ \\ The $n$ dimensional  Hermite functions $\{\phi_{\alpha,n}:|\alpha|=k\}$ form a basis of $\mathcal{H}_k$, where
$$\phi_{\alpha,n}(x_1,...,x_n)=\phi_{\alpha_1,1}(x_{1})\cdots \phi_{\alpha_n,1}(x_{n})$$
and $\phi_{\alpha_i,1}$ is defined by 
$$\phi_{\alpha_i,1}(x)=(-1)^{\alpha_{i}}\frac{1}{\sqrt[4]{\pi(\alpha_{i}!2^{n})^{2}}}e^{\frac{x^{2}}{2}}\left(\frac{d^{\alpha_{i}}}{dx^{\alpha_i}}e^{-x^{2}}\right)$$
From now on we will not make explicit reference to the dimension of Hermite functions. The family $\{\phi_\alpha\}_{\alpha\in\mathbb{N}^{n}}$ is an orthonormal basis for $L^2(\mathbb{R}^{n})$, so 
$$L^2(\mathbb{R}^{n})=\bigoplus_{k\in\mathbb{N}}\mathcal{H}_{k}.$$
Clearly, the latter decomposition is the spectral diagonalization of $H_0$. It is well known that the one parameter group $e^{itH_{0}}$ is $2\pi$ periodic (we will show it in proposition \ref{teo 1}), so for bounded operators $F$ emulating equation \eqref{QA}, we can define the average $\tilde{F}$ by the Bochner integral
\begin{equation}\label{AV_HO}
\tilde{F}=\frac{1}{2\pi}\int_{\mathbb{T}}e^{itH_0}Fe^{-itH_0} \ dt.
\end{equation}
Latter we will give sense to this average for a wider class of operators.
\section{Preservation of constants of motion and its consequences.}\label{PresCons}
An equivalent version of the next result can be found in proposition 4.46 of \cite{Fol}. However it was not pointed out that $\varphi_t$ is the flow for the classical Harmonic Oscillator. 
\begin{proposition}\label{teo 1}
For each $t\in\R$, the metaplectic representation maps the flow $\varphi_t$ to the one parameter group $e^{itH_0}$. In particular, $e^{itH_0}$ maps $S(\R^n)$ into $S(\R^n)$ and it can be extended to a strongly continuous one parameter group on $S'(\R^n)$.
\end{proposition}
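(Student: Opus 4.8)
The plan is to obtain the first assertion from the standard correspondence between the metaplectic representation and the Weyl quantization of quadratic Hamiltonians, and then to read off the mapping properties on $S(\R^n)$ and $S'(\R^n)$ either from the structure of metaplectic operators or, more elementarily, from the Hermite diagonalization of $H_0$. As recalled in Subsection~\ref{HO}, $\{\varphi_t\}_{t\in\R}$ is exactly the one-parameter subgroup of $\mathrm{Sp}(2n,\R)$ generated by the linear Hamiltonian vector field of $h_0(x,\xi)=\tfrac12(\|x\|^2+\|\xi\|^2)$ (solving $\dot x=\xi$, $\dot\xi=-x$ reproduces the matrix written there). Proposition~4.46 of \cite{Fol}, in infinitesimal form, says precisely that the metaplectic image of the one-parameter group of symplectomorphisms generated by a real quadratic form $q$ is the unitary group generated by $\Op(q)$; applied with $q=h_0$ and using $\Op(h_0)=H_0=\tfrac12(-\Delta+\|x\|^2)$, this identifies the metaplectic lift of $\varphi_t$ with $e^{itH_0}$ (possibly after replacing $t$ by $-t$, depending on the orientation convention in the definition of the representation, which is immaterial since the statement is for all $t\in\R$). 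If one prefers to pin the lift down by hand, it suffices to test both one-parameter unitary groups on the ground state $\phi_0$: one has $e^{itH_0}\phi_0=e^{int/2}\phi_0$, while the metaplectic operator attached to $\varphi_t$ applied to the Gaussian $\phi_0$ is computed by a Gaussian integral (this is Mehler's formula) and gives the same vector; two strongly continuous one-parameter unitary groups that coincide at $t=0$ and share the same generator are equal.

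For the ``in particular'' part, the invariance of $S(\R^n)$ under $e^{itH_0}$ can be argued in either of two ways. Conceptually, every metaplectic operator is a finite composition of Fourier transforms, multiplications by unimodular Gaussians $e^{i\langle Ax,x\rangle}$, and linear substitutions $u(x)\mapsto|\det L|^{1/2}u(Lx)$, each of which is a topological isomorphism of $S(\R^n)$; hence so is $e^{itH_0}$. More self-containedly, since $\{\phi_\alpha\}_{\alpha\in\mathbb N^n}$ diagonalizes $H_0$ with eigenvalue $|\alpha|+n/2$ on $\phi_\alpha$, the operator $e^{itH_0}$ multiplies the Hermite coefficient $\langle u,\phi_\alpha\rangle$ by the unimodular scalar $e^{it(|\alpha|+n/2)}$; since $u\in S(\R^n)$ if and only if the sequence $(\langle u,\phi_\alpha\rangle)_\alpha$ decays faster than every power of $(1+|\alpha|)^{-1}$, and multiplication by unimodular scalars preserves this, we get $e^{itH_0}u\in S(\R^n)$, together with the continuity of $e^{itH_0}$ on $S(\R^n)$ and of $t\mapsto e^{itH_0}u$ into $S(\R^n)$.

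To pass to $S'(\R^n)$ I would extend $e^{itH_0}$ to tempered distributions by transposition through the pairing $\langle\cdot,\cdot\rangle$: the formula $\langle e^{itH_0}T,\phi\rangle:=\langle T,e^{-itH_0}\phi\rangle$ is meaningful because $e^{-itH_0}$ maps $S(\R^n)$ continuously into itself, it agrees with the unitary action on $L^2(\R^n)\subseteq S'(\R^n)$, and it is the unique weak-$*$ continuous extension of the action on $S(\R^n)$. The group law is inherited from the action on $S(\R^n)$, and strong continuity of $t\mapsto e^{itH_0}T$ reduces to continuity of $t\mapsto e^{-itH_0}\phi$ in $S(\R^n)$ (uniformly on bounded subsets of $S(\R^n)$, by the same Hermite estimate as above).

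The one genuinely delicate point is the bookkeeping in the first assertion: fixing the sign and the metaplectic phase in the identification, which forces some care about global topology. The class of the loop $t\mapsto\varphi_t$, $t\in[0,2\pi]$, in $\pi_1(\mathrm{Sp}(2n,\R))\cong\mathbb Z$ is $\pm n$, so its metaplectic lift returns at $t=2\pi$ to $(-1)^n$ times the identity; on $L^2(\R^n)$ the group $e^{itH_0}$ is thus literally $2\pi$-periodic only for $n$ even, the $2\pi$-periodicity invoked afterwards being that of the induced action $F\mapsto e^{itH_0}Fe^{-itH_0}$. Everything beyond this identification — the mapping properties on $S(\R^n)$ and $S'(\R^n)$ — is soft once one has either the metaplectic picture or just the Hermite diagonalization in hand.
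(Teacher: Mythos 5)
Your proof is correct and rests on the same pillar the paper implicitly uses: Proposition~4.46 of Folland, which identifies the metaplectic image of the flow of a quadratic Hamiltonian with the unitary group generated by its Weyl quantization. The paper offers nothing beyond this citation, so your proposal is not merely correct but strictly more informative: you supply two independent arguments for the invariance of $S(\R^n)$ (the structural decomposition of metaplectic operators into Fourier transforms, Gaussian multipliers and linear substitutions, versus the Hermite $N$-representation estimate), and you spell out the transposition argument for the extension to $S'(\R^n)$, neither of which the paper records. The Hermite route is arguably preferable in this context since the $N$-representation theorem is anyway the engine of the whole paper (Lemma~\ref{Le1}, Theorem~\ref{StoS}).

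One remark of yours deserves to be kept: the observation that $e^{itH_0}$ is literally $2\pi$-periodic on $L^2(\R^n)$ only when $n$ is even, because $e^{2\pi i H_0}$ acts as $(-1)^n$ (the eigenvalue on $\phi_\alpha$ being $e^{2\pi i(|\alpha|+n/2)}$). This contradicts the paper's earlier claim in Subsection~\ref{HO} that ``the one parameter group $e^{itH_0}$ is $2\pi$ periodic (we will show it in Proposition~\ref{teo 1})''. As you correctly note, what is $2\pi$-periodic in all dimensions is the conjugation action $F\mapsto e^{itH_0}Fe^{-itH_0}$, which is all that equation~\eqref{AV_HO} and Corollary~\ref{QA_Def} actually use, so the paper's downstream arguments are unaffected; but the literal statement is wrong for odd $n$, and your proof proposal is the one that catches it. The only point I would tighten in your write-up is the Mehler/ground-state sanity check: agreeing on $\phi_0$ alone does not determine the generator, so that aside should be phrased as a check of the sign convention rather than as an alternative identification; the identification itself is carried by Folland's proposition.
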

\begin{Corollary}\label{QA_Def}
If $F$ is a continuous linear operator from $S(\mathbb{R}^{n})$ to $S(\mathbb{R}^{n})$ then the average operator given by equation \eqref{AV_HO} is well defined in the strong sense. Similarly, if $F$ is a continuous linear operator from $S'(\mathbb{R}^{n})$ to $S'(\mathbb{R}^{n})$ then the average operator given by equation \eqref{AV_HO} is well defined in the weak sense. 
\end{Corollary}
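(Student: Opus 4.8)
The plan is to treat the two assertions in parallel, reducing each to the integrability of a continuous function on the compact group $\mathbb{T}$ with values in a complete locally convex space. For the strong statement, fix $u\in S(\R^n)$ and set $g(t)=e^{itH_0}Fe^{-itH_0}u$; by Proposition \ref{teo 1} each $e^{\pm itH_0}$ maps $S(\R^n)$ into itself, so $g(t)\in S(\R^n)$. First I would show that $g\colon\mathbb{T}\to S(\R^n)$ is continuous, then define $\tilde F u:=\frac{1}{2\pi}\int_{\mathbb{T}}g(t)\,\de t$ as the integral over a compact set of a continuous $S(\R^n)$-valued function, and finally check that $u\mapsto\tilde F u$ is linear and continuous on $S(\R^n)$; this is precisely the statement that \eqref{AV_HO} is well defined in the strong sense.

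The continuity of $g$ reduces to showing that $t\mapsto e^{itH_0}v$ is continuous from $\mathbb{T}$ into $S(\R^n)$ for every $v\in S(\R^n)$, since $F$ is continuous on $S(\R^n)$. For this I would use that the Schwartz topology on $S(\R^n)$ is equivalent to the topology of rapid decay of Hermite coefficients: writing $v=\sum_\alpha\langle v,\phi_\alpha\rangle\phi_\alpha$, one has $e^{itH_0}v=\sum_\alpha e^{it(|\alpha|+\frac n2)}\langle v,\phi_\alpha\rangle\phi_\alpha$, so in this picture $e^{itH_0}$ is multiplication by the modulus-one sequence $(e^{it(|\alpha|+\frac n2)})_\alpha$. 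This multiplication is isometric for every Hermite seminorm, hence each $e^{itH_0}$ is a topological isomorphism of $S(\R^n)$ and the family $\{e^{itH_0}\}_{t\in\mathbb{T}}$ is equicontinuous, while $\|e^{itH_0}v-v\|\to 0$ as $t\to 0$ for each such seminorm by dominated convergence (the $\alpha$-summand $|e^{it(|\alpha|+n/2)}-1|^{2}\,|\langle v,\phi_\alpha\rangle|^{2}\,(1+|\alpha|)^{2M}$ being dominated by the summable $4\,|\langle v,\phi_\alpha\rangle|^{2}\,(1+|\alpha|)^{2M}$); continuity at an arbitrary $t_0$ then follows from the group law and the equicontinuity.

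Since $S(\R^n)$ is a Fréchet space and $\mathbb{T}$ is compact, $g$ is uniformly continuous and its Riemann sums converge in $S(\R^n)$, which gives $\tilde F u$ together with the estimate $p(\tilde F u)\le\frac{1}{2\pi}\int_{\mathbb{T}}p(g(t))\,\de t$ for every continuous seminorm $p$ and the compatibility of the integral with continuous linear maps. Linearity of $\tilde F$ is immediate. For continuity, note that $\{e^{itH_0}Fe^{-itH_0}\}_{t\in\mathbb{T}}$ is pointwise bounded on the barrelled space $S(\R^n)$, hence equicontinuous by Banach--Steinhaus: there are a continuous seminorm $q$ and $C>0$ with $p(e^{itH_0}Fe^{-itH_0}u)\le Cq(u)$ for all $t$ and $u$, so $p(\tilde F u)\le Cq(u)$.

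For the weak statement, let $F$ be continuous from $S'(\R^n)$ to $S'(\R^n)$; by Proposition \ref{teo 1}, $\{e^{itH_0}\}_{t\in\R}$ is a strongly continuous one-parameter group on $S'(\R^n)$, and, transposing the equicontinuity found above, the family $\{e^{itH_0}\}_{t\in\mathbb{T}}$ is equicontinuous there as well, whence $t\mapsto e^{itH_0}Fe^{-itH_0}T$ is continuous from $\mathbb{T}$ into $S'(\R^n)$ for each $T\in S'(\R^n)$. Its range is then a compact, hence bounded, subset of $S'(\R^n)$, and bounded subsets of $S'(\R^n)$ are equicontinuous as families of functionals on the barrelled space $S(\R^n)$; therefore $\phi\mapsto\frac{1}{2\pi}\int_{\mathbb{T}}\langle e^{itH_0}Fe^{-itH_0}T,\phi\rangle\,\de t$ is a well-defined continuous functional on $S(\R^n)$, which we take as $\tilde F T$, so the weak integral $\frac{1}{2\pi}\int_{\mathbb{T}}e^{itH_0}Fe^{-itH_0}T\,\de t$ exists in $S'(\R^n)$. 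The mere existence of these integrals is routine; the points that need care are the passage to the non-Banach setting, the upgrade of the strong continuity of $e^{itH_0}$ from $S'(\R^n)$ (as stated in Proposition \ref{teo 1}) to $S(\R^n)$ via the Hermite expansion, and the use of barrelledness of $S(\R^n)$ for the equicontinuity invoked in the continuity of $\tilde F$.
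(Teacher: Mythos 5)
The paper states Corollary \ref{QA_Def} without a written proof, but your argument is exactly the one Proposition \ref{teo 1} is set up to deliver: the integrand $t\mapsto e^{itH_0}Fe^{-itH_0}u$ is a continuous $S(\R^n)$-valued (respectively $S'(\R^n)$-valued) function on the compact circle, so the integral exists by completeness and compactness. Your proof is correct, and the extra care you take is well placed — in particular, you correctly notice that Proposition \ref{teo 1} only asserts strong continuity of $e^{itH_0}$ on $S'(\R^n)$ and fill the gap for $S(\R^n)$ via the $N$-representation topology (the same tool the paper uses in Lemma \ref{Le1}). One small simplification you could make: since $e^{\pm itH_0}$ act isometrically in every Hermite seminorm, $p(e^{itH_0}Fe^{-itH_0}u)=p(Fe^{-itH_0}u)\le C\,q(e^{-itH_0}u)=C\,q(u)$ directly, so the Banach--Steinhaus step for equicontinuity of the conjugated family is not actually needed.
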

Recall that $\Op$ denotes the Weyl quantization. In Appendix \ref{WQ} we summarize some of the main properties of Weyl quantization.
\begin{Theorem}\label{PrCOM}
Weyl quantization preserves constants of motion of the Harmonic Oscillator, more precisely, for each tempered constant of motion $f$ and each $t\in\R$, we have that 
\begin{equation}\label{PCM}
e^{itH_0}\Op(f)e^{-itH_0}=\Op(f).
\end{equation}
In particular, $[H_0,\Op(f)]=0$ on $S(\R^n)$.
\end{Theorem}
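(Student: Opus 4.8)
The natural strategy is to reduce the operator identity \eqref{PCM} to the classical identity $\varphi_t^* f = f$ via the covariance of Weyl quantization under the metaplectic representation. The plan is as follows. By Proposition \ref{teo 1}, the metaplectic representation sends the linear symplectomorphism $\varphi_t$ to the unitary $e^{itH_0}$ (up to the inessential phase, which cancels in the conjugation $e^{itH_0}\Op(f)e^{-itH_0}$). The key tool, recalled in the Appendix \ref{WQ}, is the metaplectic covariance of the Weyl calculus: for a symplectic matrix $S$ with metaplectic lift $\mu(S)$, one has $\mu(S)\,\Op(f)\,\mu(S)^{-1} = \Op(f\circ S^{-1})$. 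Applying this with $S = \varphi_t$ gives
\begin{equation*}
e^{itH_0}\,\Op(f)\,e^{-itH_0} = \Op(f\circ \varphi_{t}^{-1}) = \Op(\varphi_{-t}^* f).
\end{equation*}
Since $f$ is a tempered constant of motion, $\varphi_{-t}^* f = f$ by Definition \ref{TCOM}, and the right-hand side collapses to $\Op(f)$, which is \eqref{PCM}.

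The one subtlety is that the metaplectic covariance formula is classically stated for symbols in a nice class (e.g.\ Schwartz functions or Hörmander symbols), whereas here $f$ is only a tempered distribution. So I would first record that $\Op(f)$ makes sense as a continuous operator $S(\R^n)\to S'(\R^n)$ for any $f\in S'(\R^{2n})$ (this is standard and is recalled in the Appendix), and then extend the covariance identity to this generality by a duality/density argument: test both sides against $\phi,\psi\in S(\R^n)$, move the metaplectic operators (which preserve $S(\R^n)$ by Proposition \ref{teo 1}) onto the test functions, and use the known identity on the Schwartz symbols together with continuity of all maps involved in the weak-$*$ topology. Concretely, $\langle \Op(f\circ\varphi_t^{-1})\phi,\psi\rangle$ is a Schwartz-continuous pairing of $f$ with an explicit symbol built from $\phi,\psi$ and $\varphi_t$, and matching it with $\langle e^{itH_0}\Op(f)e^{-itH_0}\phi,\psi\rangle = \langle \Op(f)(e^{-itH_0}\phi), e^{-itH_0}\psi\rangle$ reduces to the Schwartz-symbol case, which holds by continuity and density of $S(\R^{2n})$ in $S'(\R^{2n})$.

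For the final sentence, once \eqref{PCM} is established on all of $S'(\R^n)$ (equivalently as an identity of operators $S(\R^n)\to S'(\R^n)$), differentiate at $t=0$. On the Schwartz class everything is smooth: $t\mapsto e^{-itH_0}\phi$ is differentiable in $S(\R^n)$ by Proposition \ref{teo 1}, $\Op(f)$ is continuous $S(\R^n)\to S'(\R^n)$, and $e^{itH_0}$ is strongly continuous on $S'(\R^n)$; hence $\tfrac{d}{dt}\big|_{0}\big(e^{itH_0}\Op(f)e^{-itH_0}\phi\big) = i\big(H_0\Op(f) - \Op(f)H_0\big)\phi = 0$ for all $\phi\in S(\R^n)$, i.e.\ $[H_0,\Op(f)]=0$ on $S(\R^n)$.

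The main obstacle is purely the bookkeeping in the second paragraph: making the metaplectic covariance formula rigorous for distributional symbols, being careful that the metaplectic representation is only projective so that $\mu(\varphi_t)$ and $e^{itH_0}$ differ by a unimodular scalar (harmless here, since it cancels), and checking that all the maps are continuous for the relevant topologies so the density argument goes through. The algebraic heart of the proof—covariance plus $\varphi_t^*f=f$—is immediate.
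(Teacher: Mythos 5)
Your proof takes essentially the same route as the paper: apply Proposition \ref{teo 1} to identify $e^{itH_0}$ with $\mu(\varphi_t)$, then invoke the metaplectic covariance identity \eqref{meta} and the defining property $\varphi_t^*f=f$ of a tempered constant of motion. The extra care you take to justify \eqref{meta} for distributional symbols (pairing against Schwartz test functions and using \eqref{WeylWigner} together with \eqref{metaW}) and to differentiate at $t=0$ is sound and merely spells out details the paper leaves implicit.
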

\begin{proof}
It is a direct consequence of Proposition \ref{teo 1} and equation \eqref{meta} in the Appendix \ref{WQ}.
\end{proof}
Using the Moyal product $\star$, we can find a weaker version of the latter theorem. For instance, since $h_0$ is quadratic, for suitable $f\in S'(\R^{2n})$ we have that $\Op(\{h_0,f\})=\Op(h_0\star f-f\star h_0)=[H_0,\Op(f)]$ as operators defined on $S(\R^n)$. In particular, if $f$ is a suitable tempered constant of motion $[H_0,\Op(f)]=0$. However, without using the metaplectic representation, it seems technically difficult to show that $e^{itH_0}$ maps $S(\R^n)$ into $S(\R^n)$, so \eqref{PCM} might not make sense, and even after showing that property, there are some extra technical difficulties to obtain \eqref{PCM} from the equation $[H_0,\Op(f)]=0$, which we do not know how to overcome directly. \\ \\
It is well known that $\Op$ defines a unitary isomorphism between $L^2(\R^{2n})$ and the space of Hilbert-Schmidt operators on $L^2(\R^n)$. In particular, for any $f\in L^2(\R^{2n})$, we have that $$\Op(f)=\sum_{\alpha,\beta}\langle\Op(f)\phi_\alpha,\phi_\beta\rangle\, P_{\alpha,\beta},$$ where the convergence holds in the space of Hilbert-Schmidt operators and $P_{\alpha,\beta}$ is the one dimensional rank operator given by $P_{\alpha,\beta}\phi=\langle\phi,\phi_\alpha\rangle\phi_\beta$. The particular properties of the Hermite orthonormal basis allow us to generalize such expression. 

\begin{Lemma}\label{Le1}
For any $f\in S'(\R^{2n})$, the kernel $K_f$ of $\Op(f)$ is given by 
\begin{equation}\label{kernelg}
K_f=\sum_{\alpha,\beta}\langle\Op(f)\phi_\alpha,\phi_\beta\rangle\phi_\beta\otimes \phi_\alpha,
\end{equation}
where $(\phi\otimes \psi)(x,y)=\phi(x)\psi(y)$. There is a constant $C>0$ and $\gamma\in \mathbb{N}^{2n}$ such that, for any $\alpha,\beta\in \mathbb{N}^n$, we have that
\begin{equation}\label{1ine}
|\langle\Op(f)\phi_\alpha,\phi_\beta\rangle|\leq C((\alpha,\beta)+1)^\gamma.
\end{equation}
Moreover, if $f\in S(\R^{2n})$, then for each $m\in\mathbb{N}$, we have that
\begin{equation}\label{2ine}
\sup_{\alpha,\beta\in \mathbb{N}^n}|\langle\Op(f)\phi_\alpha,\phi_\beta\rangle|\,(|\alpha|+|\beta|)^m< \infty.
\end{equation}
\end{Lemma}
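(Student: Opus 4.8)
The plan is to recover everything from the basic mapping properties of Weyl quantization on the Schwartz scale together with standard estimates on Hermite functions. First I would establish the kernel formula \eqref{kernelg}. For $f\in S'(\R^{2n})$ the operator $\Op(f)$ is a continuous map $S(\R^n)\to S'(\R^n)$, hence by the Schwartz kernel theorem it has a kernel $K_f\in S'(\R^{2n})$. Since $\{\phi_\beta\otimes\phi_\alpha\}_{\alpha,\beta}$ is an orthonormal basis of $L^2(\R^{2n})$ and in fact a topological basis producing an isomorphism $S(\R^{2n})\cong s$ (the space of rapidly decreasing double sequences) and $S'(\R^{2n})\cong s'$, one expands $K_f=\sum_{\alpha,\beta}c_{\alpha,\beta}\,\phi_\beta\otimes\phi_\alpha$ with $c_{\alpha,\beta}=\langle K_f,\phi_\beta\otimes\phi_\alpha\rangle$, convergence in $S'(\R^{2n})$. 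The identification $c_{\alpha,\beta}=\langle\Op(f)\phi_\alpha,\phi_\beta\rangle$ is then the unwinding of the defining relation $\langle K_f,\psi\otimes\phi\rangle=\langle\Op(f)\phi,\overline\psi\rangle$ (up to the conjugation conventions fixed in the Preliminaries), applied to $\phi=\phi_\alpha$, $\psi=\phi_\beta$.

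Next, inequality \eqref{1ine}: because $K_f\in S'(\R^{2n})$, continuity of the functional on $S(\R^{2n})$ gives $|\langle K_f,\psi\rangle|\le C\sum_{|\mu|,|\nu|\le N}\sup|x^\mu\partial^\nu\psi|$ for some $C,N$. Apply this with $\psi=\phi_\beta\otimes\phi_\alpha$ and use the classical fact that the Schwartz seminorms of a Hermite function $\phi_\alpha$ grow at most polynomially in $\alpha$ — concretely $\|x^\mu\partial^\nu\phi_\alpha\|_\infty\le C_{\mu,\nu}(|\alpha|+1)^{(|\mu|+|\nu|)/2}$, which follows by iterating the creation/annihilation recursions $x_j\phi_\alpha$, $\partial_j\phi_\alpha$ that each raise or lower one index with coefficients $O(\sqrt{\alpha_j+1})$. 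This yields a bound of the form $C((\alpha,\beta)+1)^\gamma$ with $\gamma$ determined by $N$.

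For \eqref{2ine} when $f\in S(\R^{2n})$, the point is that $\Op$ maps $S(\R^{2n})$ continuously into the space of operators whose kernel lies in $S(\R^{2n})$, i.e. $K_f\in S(\R^{2n})$. Expanding $K_f$ in the Hermite basis of $L^2(\R^{2n})$ and using that $\{\phi_\beta\otimes\phi_\alpha\}$ is a topological basis of $S(\R^{2n})$, the coefficient sequence $(c_{\alpha,\beta})$ is rapidly decreasing: for every $m$, $\sup_{\alpha,\beta}|c_{\alpha,\beta}|\,(|\alpha|+|\beta|)^m<\infty$. Equivalently, one pairs $K_f$ against $\phi_\beta\otimes\phi_\alpha$ after integrating by parts, moving a power of the harmonic-oscillator operator $H_0\otimes I + I\otimes H_0$ onto $K_f$: since $(H_0\otimes I+I\otimes H_0)(\phi_\beta\otimes\phi_\alpha)=(|\alpha|+|\beta|+n)(\phi_\beta\otimes\phi_\alpha)$, each application produces a factor $(|\alpha|+|\beta|+n)$ while $(H_0\otimes I+I\otimes H_0)^m K_f$ stays bounded in $L^2$ (indeed in $S$), giving the claimed decay.

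The main obstacle I anticipate is purely bookkeeping rather than conceptual: pinning down the exact relation between $\langle\Op(f)\phi_\alpha,\phi_\beta\rangle$ and $\langle K_f,\phi_\beta\otimes\phi_\alpha\rangle$ under the paper's conventions ($L^2$ inner product linear on the left, $\langle T,\phi\rangle=T(\overline\phi)$, and the Weyl normalization fixed in Appendix \ref{WQ}), so that the complex conjugates land where \eqref{kernelg} says they do. Everything else is a matter of quoting the Schwartz kernel theorem, the continuity of $\Op$ on the Schwartz scale, and the polynomial control of Hermite seminorms; the two estimates \eqref{1ine} and \eqref{2ine} are then immediate consequences of testing the (tempered, resp. Schwartz) kernel against the tensor-product Hermite basis.
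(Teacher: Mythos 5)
Your proposal is correct and, at its core, takes the same route as the paper: expand in the tensor Hermite basis and characterize the coefficient sequence. The paper is more terse about it --- it simply invokes the $N$-representation theorem for $S(\R^n)$ (Reed--Simon, Thm.\ V.13) to expand $\phi,\psi$ in the Hermite basis with convergence in $S(\R^n)$, pushes that expansion through the continuous map $\Op(f)$ to obtain \eqref{kernelg}, and then cites the $N$-representation theorems for $S'(\R^{2n})$ and $S(\R^{2n})$ (Thms.\ V.13 and V.14) to get \eqref{1ine} and \eqref{2ine} as statements about the coefficient sequences of tempered distributions, resp.\ Schwartz functions. What you do differently is to unpack those citations: for \eqref{1ine} you re-derive the polynomially bounded coefficient estimate by combining a Schwartz seminorm bound on $K_f$ with explicit polynomial growth of the Hermite seminorms (via the creation/annihilation recursions); for \eqref{2ine} you offer, besides the $S(\R^{2n})\cong s$ identification, the self-contained integration-by-parts argument using the eigenvector relation $(H_0\otimes I + I\otimes H_0)(\phi_\beta\otimes\phi_\alpha)=(|\alpha|+|\beta|+n)\,\phi_\beta\otimes\phi_\alpha$, which is essentially the proof of the $N$-representation theorem itself. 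Your worry about conjugation bookkeeping is legitimate but immaterial here, since the Hermite functions are real-valued, so $\overline{\phi_\alpha}=\phi_\alpha$ and the two pairing conventions $\langle K_f,\psi\otimes\overline\phi\rangle$ and $\langle K_f,\psi\otimes\phi\rangle$ agree on the basis. In short: same decomposition, same underlying theorem; you spell out the estimates the paper leaves as a citation.
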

\begin{proof}
Recall that $\Op(f):S(\R^n)\to S'(\R^n)$ is continuous. The $N$-representation theorem for $S(\R^n)$ (theorem V.13 \cite{RS}) implies that, for any $\phi\in S(\R^n)$, the convergence of the series $\sum \langle\phi,\phi_\alpha\rangle\phi_\alpha=\phi$ holds in $S(\R^n)$ endowed with its canonical locally convex topology. Therefore, for any $\phi,\psi\in S(\R^n)$ we have that
$$
\langle K_f,\psi\otimes\overline\phi\rangle= \langle\Op(f)\phi,\psi\rangle=\sum_{\alpha,\beta} \langle\Op(f)\phi_\alpha,\phi_\beta\rangle\langle \phi_\beta,\psi\rangle\langle \phi,\phi_\alpha\rangle=
\sum_{\alpha,\beta} \langle\Op(f)\phi_\alpha,\phi_\beta\rangle
\langle\phi_\beta\otimes\phi_\alpha, \psi\otimes\overline\phi\rangle.
$$
The inequalities \eqref{1ine} and \eqref{2ine} follows from the $N$-representation theorem for $S'(\R^{2n})$ and $S(\R^{2n})$ respectively (theorem V.13 and theorem V.14 in \cite{RS}).
\end{proof}
Inequalities \eqref{1ine} and \eqref{2ine} characterize elements in  $S'(\R^{2n})$ and $S(\R^{2n})$ respectively, but we shall leave the consequences of that fact for Section \ref{char}.\\ \\
We shall apply the previous results when $f$ is a tempered constant of motion of $h_0$. 
\begin{Corollary}\label{mainC}
Let $f$ be a tempered constant of motion.  If $|\alpha|\neq|\beta|$, then $\langle\Op(f)\phi_\beta,\phi_\alpha\rangle=0$. In particular, the kernel of $\Op(f)$ is given by
\begin{equation}\label{kernel}
K_f=\sum_{|\alpha|=|\beta|}\langle\Op(f)\phi_\alpha,\phi_\beta\rangle\phi_\beta\otimes \phi_\alpha.
\end{equation}
\end{Corollary}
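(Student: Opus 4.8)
The plan is to exploit the conjugation invariance \eqref{PCM} from Theorem \ref{PrCOM} together with the fact that each Hermite function is an eigenvector of $H_0$. Recall that $H_0\phi_\gamma=(|\gamma|+\tfrac n2)\phi_\gamma$, so by the functional calculus $e^{-itH_0}\phi_\gamma=e^{-it(|\gamma|+n/2)}\phi_\gamma$ for every $\gamma\in\mathbb N^n$ and $t\in\R$. By Proposition \ref{teo 1}, $e^{-itH_0}$ restricts to a continuous operator on $S(\R^n)$ and $e^{itH_0}$ extends to $S'(\R^n)$, acting there as the transpose of $e^{-itH_0}$ with respect to our pairing, i.e. $\langle e^{itH_0}T,\phi\rangle=\langle T,e^{-itH_0}\phi\rangle$ for $T\in S'(\R^n)$, $\phi\in S(\R^n)$ (this is just unitarity of $e^{itH_0}$ on $L^2(\R^n)$; since the $\phi_\gamma$ are real-valued there is no sign issue coming from the conjugation built into $\langle\cdot,\cdot\rangle$).

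With these ingredients I would simply compute, for fixed $\alpha,\beta\in\mathbb N^n$,
$$\langle\Op(f)\phi_\beta,\phi_\alpha\rangle=\langle e^{itH_0}\Op(f)e^{-itH_0}\phi_\beta,\phi_\alpha\rangle=e^{-it(|\beta|+n/2)}\langle\Op(f)\phi_\beta,e^{-itH_0}\phi_\alpha\rangle=e^{it(|\alpha|-|\beta|)}\langle\Op(f)\phi_\beta,\phi_\alpha\rangle,$$
which is valid for every $t\in\R$ because $f$ is a tempered constant of motion and hence \eqref{PCM} holds, as an identity of maps $S(\R^n)\to S'(\R^n)$ that may be paired against $\phi_\alpha\in S(\R^n)$. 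Therefore $\bigl(1-e^{it(|\alpha|-|\beta|)}\bigr)\langle\Op(f)\phi_\beta,\phi_\alpha\rangle=0$ for all $t$, and choosing any $t$ with $t(|\alpha|-|\beta|)\notin2\pi\mathbb Z$ — which exists precisely when $|\alpha|\neq|\beta|$ — forces $\langle\Op(f)\phi_\beta,\phi_\alpha\rangle=0$.

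Finally, for the kernel formula I would feed this vanishing back into \eqref{kernelg} of Lemma \ref{Le1}: relabelling indices, the coefficient $\langle\Op(f)\phi_\alpha,\phi_\beta\rangle$ of $\phi_\beta\otimes\phi_\alpha$ is zero whenever $|\alpha|\neq|\beta|$, so the series collapses to the sum over $|\alpha|=|\beta|$, which is exactly \eqref{kernel}, with convergence in $S'(\R^{2n})$ inherited from Lemma \ref{Le1}. There is no genuine obstacle in this argument; the only points requiring a little care are the bookkeeping of complex conjugates in the duality pairing and the (immediate, via Proposition \ref{teo 1}) justification that $e^{itH_0}$ on $S'(\R^n)$ is the transpose of $e^{-itH_0}$ on $S(\R^n)$.
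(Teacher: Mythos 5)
Your argument is correct and is essentially the paper's own proof: both pull $e^{-itH_0}$ out of the pairing via the eigenvalue relation $e^{-itH_0}\phi_\gamma=e^{-it(|\gamma|+n/2)}\phi_\gamma$, obtain $\langle\Op(f)\phi_\beta,\phi_\alpha\rangle=e^{it(|\alpha|-|\beta|)}\langle\Op(f)\phi_\beta,\phi_\alpha\rangle$ from \eqref{PCM}, and conclude the vanishing when $|\alpha|\neq|\beta|$, then substitute into \eqref{kernelg}. You simply spell out the intermediate duality/conjugation bookkeeping that the paper leaves implicit.
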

\begin{proof}
Notice that
$$
\langle\Op(f)\phi_\alpha,\phi_\beta\rangle=\langle e^{itH}\Op(f)e^{-itH}\phi_\alpha,\phi_\beta\rangle=e^{it(|\alpha|-|\beta|)}\langle\Op(f)\phi_\alpha,\phi_\beta\rangle.
$$
Thus, $\langle\Op(f)\phi_\beta,\phi_\alpha\rangle=0$, unless $|\alpha|=|\beta|$. 
\end{proof}

The fact that the eigenspaces of $H_0$ are finite dimensional and the $N$-representation theorem implies the following remarkable result.
\begin{Theorem}\label{StoS}
If $f$ is a tempered constant of motion, then $\Op(f):S(\R^n)\to S(\R^n)$.
\end{Theorem}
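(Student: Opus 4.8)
The plan is to combine the matrix-coefficient decay estimate \eqref{2ine} from Lemma \ref{Le1} with the block-diagonal structure from Corollary \ref{mainC}. First I would take $\phi\in S(\R^n)$ and expand it in the Hermite basis, $\phi=\sum_\alpha \langle\phi,\phi_\alpha\rangle\phi_\alpha$; since $\phi\in S(\R^n)$, the coefficients $a_\alpha:=\langle\phi,\phi_\alpha\rangle$ are rapidly decreasing in $|\alpha|$, i.e. $\sup_\alpha |a_\alpha|(|\alpha|+1)^m<\infty$ for every $m$. Applying $\Op(f)$ formally gives $\Op(f)\phi=\sum_\beta b_\beta \phi_\beta$ with $b_\beta=\sum_\alpha \langle\Op(f)\phi_\alpha,\phi_\beta\rangle\, a_\alpha$, and by Corollary \ref{mainC} the inner sum runs only over the finitely many $\alpha$ with $|\alpha|=|\beta|$.

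The key point is to show $(b_\beta)_\beta$ is again rapidly decreasing, for then the $N$-representation theorem (Theorem V.13 of \cite{RS}) guarantees $\sum_\beta b_\beta\phi_\beta$ converges in the Schwartz topology to an element of $S(\R^n)$, and this element must be $\Op(f)\phi$. To estimate $b_\beta$: the number of multi-indices $\alpha$ with $|\alpha|=k$ is $d_k=\binom{n+k-1}{k}$, which grows polynomially in $k$; each matrix coefficient is bounded by $C(|\alpha|+|\beta|)^{-M}=C(2k)^{-M}$ for arbitrary $M$ by \eqref{2ine}; and each $|a_\alpha|$ is bounded by $C_m(k+1)^{-m}$. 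Hence for $|\beta|=k$,
$$
|b_\beta|\le \sum_{|\alpha|=k} |\langle\Op(f)\phi_\alpha,\phi_\beta\rangle|\,|a_\alpha| \le d_k\cdot C(2k)^{-M}\cdot C_m(k+1)^{-m},
$$
and since $d_k$ grows only polynomially, choosing $M$ (or $m$) large enough makes $|b_\beta|(|\beta|+1)^{\ell}$ bounded for any prescribed $\ell$. This shows $(b_\beta)$ decays rapidly, as required.

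Finally I would verify that the vector defined by the convergent series $\sum_\beta b_\beta\phi_\beta$ genuinely equals $\Op(f)\phi$ as a tempered distribution: pairing against any $\psi\in S(\R^n)$ and using continuity of $\Op(f):S(\R^n)\to S'(\R^n)$ together with the $S(\R^n)$-convergence of the Hermite expansion of $\phi$, one gets $\langle\Op(f)\phi,\psi\rangle=\sum_{\alpha,\beta}\langle\Op(f)\phi_\alpha,\phi_\beta\rangle a_\alpha \overline{\langle\psi,\phi_\beta\rangle}$, which is exactly the pairing of $\sum_\beta b_\beta\phi_\beta$ with $\psi$; the interchange of summations is justified by the absolute convergence established above. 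The main obstacle is precisely controlling the inner sum over $\alpha$ in the definition of $b_\beta$ — one must use that the eigenspace dimension $d_k$ grows polynomially so that summing $d_k$ terms does not destroy the rapid decay coming from \eqref{2ine}; without the block-diagonality of Corollary \ref{mainC} (which collapses the double sum to a single polynomially-sized sum) this estimate would fail.
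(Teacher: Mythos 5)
Your overall strategy (expand in the Hermite basis, use block diagonality from Corollary \ref{mainC}, show the output coefficients decay rapidly, invoke the $N$-representation theorem) is exactly the paper's approach, but you invoke the wrong inequality and this creates a genuine error. You write that ``each matrix coefficient is bounded by $C(|\alpha|+|\beta|)^{-M}$ for arbitrary $M$ by \eqref{2ine}'' and later attribute ``the rapid decay coming from \eqref{2ine}.'' But Lemma \ref{Le1} gives inequality \eqref{2ine} \emph{only under the additional hypothesis} $f\in S(\R^{2n})$, while the theorem is about an arbitrary $f\in S'(\R^{2n})$. For a general tempered constant of motion the matrix coefficients $\langle\Op(f)\phi_\alpha,\phi_\beta\rangle$ need not decay at all; indeed they can grow. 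Take $f=h_0$ itself: then $\Op(f)=H_0$ and the diagonal entries are $|\alpha|+\tfrac{n}{2}\to\infty$. Or take $f\equiv 1$, whose $\Op(f)$ is the identity with all diagonal entries equal to $1$. In both cases $(|\alpha|+|\beta|)^{-M}$ would be nonsense. The only bound available for tempered $f$ is \eqref{1ine}, i.e.\ polynomial \emph{growth} $|\langle\Op(f)\phi_\alpha,\phi_\beta\rangle|\leq C((\alpha,\beta)+1)^\gamma$, which when restricted to the block $|\alpha|=|\beta|=k$ gives at best $C(k+1)^{|\gamma|}$.

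The fix is small and brings your argument in line with the paper's: keep the rest of the structure, replace the $(2k)^{-M}$ factor by the polynomial bound $C(k+1)^{|\gamma|}$ from \eqref{1ine}, and let \emph{all} of the rapid decay of $b_\beta$ come from the Schwartz coefficients of $\phi$, which satisfy $|a_\alpha|\leq C_m(|\alpha|+1)^{-m}$ for every $m$. You then get $|b_\beta|\leq d_k\cdot C(k+1)^{|\gamma|}\cdot C_m(k+1)^{-m}$ for $|\beta|=k$, and since $d_k$ grows like $k^{n-1}$, choosing $m$ large makes this $O((k+1)^{-\ell})$ for any prescribed $\ell$. This is exactly the estimate the paper carries out; conceptually, the one-sided polynomial control \eqref{1ine} on the operator's matrix is enough precisely because the block is finite ($d_k$ terms, polynomial in $k$) and the test function supplies all the decay.
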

\begin{proof}
By Corollary \ref{mainC} we know that for all $\varphi\in S(\R^n)$ we have
$$\Op{(f)}\varphi= \sum_{|\alpha|=|\beta|}\langle\Op(f)\phi_\alpha,\phi_\beta \rangle \langle \varphi, \phi_\beta\rangle  \phi_\alpha=\sum_{\alpha}\left(\sum_{|\beta|=|\alpha|}\langle\Op(f)\phi_\alpha,\phi_\beta \rangle \langle \varphi, \phi_\beta\rangle\right)  \phi_\alpha$$
Notice that there is $\gamma\in \mathbb{N}^{2n}$ and $C>0$, such that given $\alpha\in\mathbb{N}^n$ and for all $\beta\in\mathbb{N}^n$ with $|\alpha|=|\beta|$
$$|\langle\Op(f)\phi_\alpha,\phi_\beta \rangle \langle \varphi, \phi_\beta\rangle|\leq C\max_{|\beta|=|\alpha|}((\alpha,\beta)+1)^\gamma |\langle \varphi, \phi_\beta\rangle|\leq C (|\alpha|+1)^{|\gamma|} |\langle \varphi, \phi_\beta\rangle|.$$
Since $\varphi\in S(\R^n)$ and $d_{|\beta|}$ is a polynomial of degree $n-1$ on $|\beta|$, for all $m\in\mathbb{N}$ we have that
\begin{eqnarray*}
|\alpha|^{m}\left|\sum_{|\beta|=|\alpha|}\langle\Op(f)\phi_\alpha,\phi_\beta \rangle \langle \varphi, \phi_\beta\rangle\right|&\leq& \sum_{|\beta|=|\alpha|}|\alpha|^{m}(|\alpha|+1)^{|\gamma|} |\langle \varphi, \phi_\beta\rangle| \\
&\leq&\sum_{|\beta|=|\alpha|}\frac{1}{d_{|\beta|}} \sup_{\beta\in \mathbb{N}^{n}} \left(d_{|\beta|}|\beta|^{m}(|\beta|+1)^{|\gamma|} |\langle \varphi, \phi_\beta\rangle| \right)\\
&\leq& C
\end{eqnarray*}
By the $N-$representation Theorem $\Op(f)\varphi\in S(\R^n).$
\end{proof}
The following straightforward consequence is an example of how the results of this article can be applied in the Fock-Bargmann model.
\begin{Corollary}
Let $f$ be a tempered constant of motion and let $B$ be the Bargmann transform. The operator $B\Op(f)B^*$ maps homogeneous polynomials of degree $k$ into homogeneous polynomials of degree $k$ and has Wick symbol $f_W$ determined by
$$
f_W(z,\overline z)=2^n\int f(x,\xi)e^{-2\pi[(s-\xi)^2+(r-x)^2]}\de\xi\de x,
$$
where $z=r-is$.
\end{Corollary}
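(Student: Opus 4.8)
The plan is to read the statement off from Corollary~\ref{mainC} and Theorem~\ref{StoS} together with the standard dictionary between the Schrödinger and Fock--Bargmann models. First I would recall the relevant facts about the Bargmann transform $B$ (see \cite{Fol} or Appendix~\ref{WQ}): it is a unitary map of $L^2(\R^n)$ onto the Fock space $\mathcal F_n$ of entire functions, and it carries each Hermite function $\phi_\alpha$ to a constant multiple of the monomial $z^\alpha$; consequently $B$ maps the eigenspace $\mathcal H_k=\operatorname{span}\{\phi_\alpha:|\alpha|=k\}$ onto the space $\mathcal P_k$ of homogeneous polynomials of degree $k$. Since $f$ is a tempered constant of motion, Theorem~\ref{StoS} gives $\Op(f)\phi_\alpha\in S(\R^n)\subseteq L^2(\R^n)$, while Corollary~\ref{mainC} (equivalently the kernel formula \eqref{kernel}) forces the matrix elements $\langle\Op(f)\phi_\alpha,\phi_\beta\rangle$ to vanish unless $|\alpha|=|\beta|$, so that
\[
\Op(f)\phi_\alpha=\sum_{|\beta|=|\alpha|}\langle\Op(f)\phi_\alpha,\phi_\beta\rangle\,\phi_\beta\in\mathcal H_{|\alpha|},
\]
a finite sum. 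Hence $B\Op(f)B^{*}$ maps $\mathcal P_k$ into $\mathcal P_k$ for every $k$, which is the first assertion.

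For the Wick symbol, recall that the Wick (covariant, Berezin) symbol of an operator $A$ on $\mathcal F_n$ is the function $z\mapsto\langle A\kappa_z,\kappa_z\rangle/\|\kappa_z\|^{2}$, where $\kappa_z$ is the reproducing kernel at $z$; by Lemma~\ref{Le1} the matrix elements of $B\Op(f)B^{*}$ in the monomial basis grow at most polynomially in $|\alpha|,|\beta|$, so this expression converges and, being the restriction to the diagonal of the sesquiholomorphic function $(z,w)\mapsto\langle A\kappa_w,\kappa_z\rangle$, it determines the operator. Under $B^{*}$ the normalized reproducing kernel $\kappa_z/\|\kappa_z\|$ corresponds to a normalized coherent state $\psi_z\in S(\R^n)$, a Gaussian wave packet whose phase-space centre is the point encoded by $z=r-is$. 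Therefore $f_W(z,\overline z)=\langle\Op(f)\psi_z,\psi_z\rangle$, and the right-hand side is a finite number because $\Op(f)\psi_z\in S(\R^n)$ by Theorem~\ref{StoS}. Invoking the fundamental pairing identity of Weyl calculus from Appendix~\ref{WQ}, namely that $\langle\Op(f)u,u\rangle$ equals the pairing of the symbol $f$ with the Wigner transform $W(u,u)$ of $u$, together with the explicit computation $W(\psi_z,\psi_z)(x,\xi)=2^{n}e^{-2\pi[(x-r)^2+(\xi-s)^2]}$, yields the stated formula; the pairing makes sense since this Gaussian lies in $S(\R^{2n})$ and $f\in S'(\R^{2n})$, and one checks $\int 2^{n}e^{-2\pi[(x-r)^2+(\xi-s)^2]}\,\de x\,\de\xi=1$, consistent with $\|\psi_z\|=1$.

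The only genuinely delicate part is the bookkeeping of conventions: one must fix the normalizations of $B$, of the Wigner transform, and of $\Op$ as used in the Appendix, verify that $B$ indeed sends $\phi_\alpha$ to a multiple of $z^\alpha$ (equivalently $\mathcal H_k$ onto $\mathcal P_k$), and, most importantly, identify precisely which Gaussian coherent state corresponds to $z=r-is$ so that the centre of $W(\psi_z,\psi_z)$ comes out as $(r,s)$ with the correct signs. All of this is routine once the correspondence $z\leftrightarrow\psi_z$ and the Wigner transform of a coherent state are recorded; no analytic input beyond Theorem~\ref{StoS}, Corollary~\ref{mainC} and Lemma~\ref{Le1} is needed.
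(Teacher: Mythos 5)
Your proof is correct in substance and takes a more self-contained route than the paper's for the second assertion. For the first assertion (that $B\Op(f)B^*$ preserves $\mathcal P_k$) your argument via Corollary~\ref{mainC} and the finite-dimensionality of $\H_k$ is exactly what makes the paper's terse remark ``it is well known that $B\H_k$ is the space of homogeneous polynomials of degree $k$'' actually do the job, so you and the paper are on the same page there. For the Wick symbol, the paper does not derive the integral formula at all: it simply invokes Theorem~\ref{StoS} (to ensure the operator lives on the right dense domain) and then cites Proposition 2.97 of Folland, which is precisely the statement that the Wick symbol of an operator with Weyl symbol $f$ is the Gaussian-smoothed $f$ appearing in the corollary. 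You instead re-derive that proposition from scratch, by (i) identifying the Wick symbol with $\langle\Op(f)\psi_z,\psi_z\rangle$ where $\psi_z=B^*\bigl(\kappa_z/\|\kappa_z\|\bigr)$, (ii) invoking the Weyl--Wigner pairing, and (iii) computing $W(\psi_z,\psi_z)$. That is a genuinely different and more informative argument: it explains where the Gaussian comes from, whereas the paper's proof only points at a reference.

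One caveat worth making explicit, since you already sensed it: the Wigner transform in the paper's Appendix~\ref{WQ} is defined with $e^{-i\xi\cdot p}$ and \emph{no} factor of $2\pi$, so if you compute $W(\psi_z,\psi_z)$ literally with that convention and the usual unit-variance Gaussians you get $2^n e^{-[(x-r)^2+(\xi-s)^2]}$, not $2^n e^{-2\pi[(x-r)^2+(\xi-s)^2]}$. The $2\pi$ in the corollary's exponent comes from Folland's conventions (his Bargmann transform and his Wigner transform both carry $2\pi$'s, and his coherent states are Gaussians of the form $2^{n/4}e^{-\pi|x-a|^2+\cdots}$); the paper implicitly switches to those conventions by quoting Proposition 2.97 verbatim. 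So your derivation goes through cleanly, but the pairing identity \eqref{WeylWigner}, the Bargmann transform, and the coherent states must all be taken in Folland's $2\pi$-normalization — one cannot mix the appendix's Wigner transform with Folland's formula. Your closing paragraph flags exactly this; it is the one place where ``routine bookkeeping'' is actually load-bearing.
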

\begin{proof}
It is well known that $B\H_k$ is the space of homogeneous polynomials of degree $k$. The last statement follows from the previous theorem \ref{StoS} and proposition 2.97 in \cite{Fol}. 
\end{proof}
Recall that a tempered distribution $T$ is called real if for every $\varphi\in S(\R^{m})$ we have $T(\varphi)=\overline{T(\overline{\varphi}})$. The identity \eqref{WeylWigner} implies that if $f\in S'(\R^{2n})$ is real, then $$\langle\Op(f)\varphi,\psi\rangle=\langle\varphi,\Op(f)\psi\rangle.$$
However, even if $\Op(f)$ maps $S(\R^{n})$ into $S(\R^{n})$, the latter identity does not guarantee that $\Op(f)$ is a selfadjoint operator. The following Theorem implies that this is the case if, in addition, $f$ is a tempered constant of motion.
\begin{Theorem}\label{selfcm}
If $f$ is a real tempered constant of motion, then $\Op(f)$ is a essentially selfadjoint operator on $S(\R^{n})$. In addition the following identity holds
\begin{equation}\label{deco}
 \Op(f)=\bigoplus_k\Op^k(f)=\bigoplus_k\sum_{|\alpha|=|\beta|=k}\langle\Op_\hb(f)\phi_\alpha,\phi_\beta\rangle P_{\alpha,\beta},
\end{equation}
where  $\Op^k(f)$ is the restriction of $\Op(f)$ to $\H_k$. If  $\sigma(\Op(f))$ and $\sigma_{p}(\Op(f))$  denotes the spectrum and the point spectrum of $\Op(f)$ respectively, then 
$$
\sigma_{p}(\Op(f))=\bigcup_k\sigma_{p}(\Op^k(f)).
$$ 
In particular $\sigma_{p}(\Op(f))$ is at most numerable and it is dense in $\sigma(\Op(f))$. Moreover, $\Op(f)$ is bounded if and only if
$$
\sup\{|\lb|:\lb\in \sigma_{p}(\Op(f))\}<\infty.
$$
\end{Theorem}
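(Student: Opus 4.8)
The plan is to exploit the orthogonal decomposition $L^2(\R^n)=\bigoplus_k\H_k$ together with Corollary \ref{mainC}, which tells us that $\Op(f)$ has no matrix entries connecting $\H_j$ to $\H_k$ for $j\neq k$. First I would make precise the claim that $\Op(f)$ leaves each $\H_k$ invariant: by Theorem \ref{StoS}, $\Op(f)$ maps $S(\R^n)$ into $S(\R^n)$, and each Hermite function $\phi_\alpha$ lies in $S(\R^n)$, so $\Op(f)\phi_\alpha$ is a well-defined Schwartz function whose Hermite expansion, by Corollary \ref{mainC}, only involves $\phi_\beta$ with $|\beta|=|\alpha|$; hence $\Op(f)\phi_\alpha\in\H_{|\alpha|}$. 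Since $\H_k$ is finite dimensional, $\Op^k(f):=\Op(f)|_{\H_k}$ is a genuine linear operator on a finite-dimensional space, given by the finite matrix $(\langle\Op(f)\phi_\alpha,\phi_\beta\rangle)_{|\alpha|=|\beta|=k}$, and the displayed identity \eqref{deco} is just the reassembly of these blocks on the algebraic direct sum $\bigoplus_k\H_k$, which is a core (it contains all finite linear combinations of Hermite functions, a dense set inside $S(\R^n)$).

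Next I would establish essential selfadjointness. Because $f$ is real, the identity $\langle\Op(f)\varphi,\psi\rangle=\langle\varphi,\Op(f)\psi\rangle$ holds for $\varphi,\psi\in S(\R^n)$ (noted just before the theorem via \eqref{WeylWigner}), so $\Op(f)$ is symmetric on $S(\R^n)$; restricting to $\bigoplus_k\H_k$ it is still symmetric, and each block $\Op^k(f)$ is therefore a Hermitian matrix, in particular selfadjoint on $\H_k$. For a symmetric operator that is an orthogonal direct sum of (bounded) selfadjoint operators on mutually orthogonal reducing subspaces whose union is dense, essential selfadjointness is standard: the closure is $\bigoplus_k\Op^k(f)$ with its natural maximal domain $\{u:\sum_k\|\Op^k(f)P_ku\|^2<\infty\}$, and one checks $\mathrm{Ran}(\Op(f)\pm i)$ is dense by solving block by block (each $\Op^k(f)\pm i$ is invertible on $\H_k$). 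I would phrase this either by directly verifying the deficiency-index criterion or by citing the standard fact about direct sums of selfadjoint operators; either way it is routine once the block structure is in hand.

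From the direct-sum form, the spectral statements follow. The point spectrum of a direct sum $\bigoplus_k A_k$ of selfadjoint operators is $\bigcup_k\sigma_p(A_k)$ up to the subtlety that an eigenvalue could arise from a vector spread across infinitely many blocks — but since each $A_k$ here has pure point spectrum (finite matrix) with finitely many eigenvalues, and any eigenvector of the direct sum projects to eigenvectors of the blocks, we get exactly $\sigma_p(\Op(f))=\bigcup_k\sigma_p(\Op^k(f))$; this is a countable union of finite sets, hence at most countable. Density of $\sigma_p$ in $\sigma$ is the general fact that for $\bigoplus_k A_k$ one has $\sigma(\bigoplus_k A_k)=\overline{\bigcup_k\sigma(A_k)}=\overline{\bigcup_k\sigma_p(A_k)}$. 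Finally, boundedness of $\Op(f)$ is equivalent to $\sup_k\|\Op^k(f)\|<\infty$, and for each Hermitian matrix $\|\Op^k(f)\|=\max\{|\lambda|:\lambda\in\sigma_p(\Op^k(f))\}$, so $\Op(f)$ is bounded iff $\sup\{|\lambda|:\lambda\in\sigma_p(\Op(f))\}<\infty$.

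I expect the main obstacle to be purely expository rather than mathematical: carefully justifying that the essentially selfadjoint closure really is the maximal direct-sum operator and that no eigenvalue of the closure escapes $\bigcup_k\sigma_p(\Op^k(f))$ — i.e. pinning down the domain issues for an unbounded direct sum — while keeping the argument short. Everything else reduces to finite-dimensional linear algebra on the blocks plus the already-proved Theorems \ref{StoS} and \ref{PrCOM} and Corollary \ref{mainC}.
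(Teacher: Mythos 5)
Your argument is correct and reaches the spectral conclusions by essentially the same reasoning as the paper, but your route to essential selfadjointness is genuinely different. The paper builds the strongly continuous unitary group $U(t)=\bigoplus_k e^{it\Op^k(f)}$, verifies strong continuity by dominated convergence, applies Stone's theorem to obtain a selfadjoint generator $H$, shows via a Fubini computation that $t\mapsto U(t)\varphi$ is differentiable with derivative $iU(t)\Op(f)\varphi$ for $\varphi\in S(\R^n)$ (so $H$ extends $\Op(f)|_{S(\R^n)}$), and finally checks that $U(t)$ preserves $S(\R^n)$ in order to invoke Reed--Simon Theorem VIII.10. You instead verify the basic deficiency-index criterion directly: each block $\Op^k(f)\pm i$ is invertible on the finite-dimensional $\H_k$, so solving block by block shows $\mathrm{Ran}(\Op(f)\pm i)$ is dense, which gives essential selfadjointness on the span of the Hermite functions, and hence on the larger symmetric domain $S(\R^n)$ since a selfadjoint closure admits no proper symmetric extensions. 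Your approach is shorter and avoids the step of showing $U(t)S(\R^n)\subseteq S(\R^n)$; the paper's approach has the side benefit of explicitly identifying $e^{it\overline{\Op(f)}}$ as a direct sum of finite-dimensional unitaries, a fact the authors reuse in the remark that immediately follows the theorem. One small expository point: in your opening paragraph you call the span of the Hermite functions ``a core'' before essential selfadjointness has been established; that statement should appear as the conclusion of the deficiency-index argument, not as a premise.
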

\begin{proof}
Since $\Op^{k}(f)$ is a selfadjoint operator, $U_k(t):=e^{it\Op^k(f)}$ is a unitary one parameter group on $\mathcal{H}_k$. Define 
$$U(t):=\bigoplus_{k\in \mathbb{N}}U_k.$$ Clearly $U(t)$ is a unitary one parameter group, let us check it is also strongly continuous. Let $\varphi\in L^2(\R^{n})$ then 
$$\|U(t)\varphi-\varphi\|^{2}=\sum_{k=1}^{\infty}\|U_k(t)P_k\varphi-P_k\varphi\|^{2}.$$
Note that $\|U_k(t)P_k\varphi-P_k\varphi\|^{2}\leq 4\|P_k\varphi\|^{2}$, so by the Lebesgue dominated convergence theorem, $\displaystyle\lim_{t\to 0}\|U(t)\varphi-\varphi\|^{2}=0$, which proves $U$ is strongly continuous. By Stone's Theorem $U(t)$ has a selfadjoint infinitesimal generator $H$, we shall prove that $S(\R^{n})\subset D(H)$ and $H\varphi=\Op(f)\varphi$, for $\varphi\in S(R^n)$. Now observe that, by Fubini's Theorem
\begin{eqnarray*}
\int_{0}^{t}iU(a)\Op(f)\varphi \ da&=&\int_{0}^{t}\sum_{k=0}^{\infty}i e^{a\Op_k (f)}\Op_k (f)P_k\varphi \ da \\
&=&\sum_{k=0}^{\infty}\int_{0}^{t}i e^{a\Op_k (f)}\Op_k (f)P_k\varphi \ da\\
&=&\sum_{k=0}^{\infty}(U_k(t)-I)P_k\varphi \\
&=&(U(t)-I)\varphi.
\end{eqnarray*}
The last equality implies that the function $t\to U(t)\varphi$ is differentiable and $\displaystyle\frac{dU(t)\varphi}{dt}=iU(t)\Op(f)\varphi$ which proves our claim. The same argument in the proof of Theorem \ref{StoS} shows that $U(t)$ maps $S(\R^n)$ to $S(\R^n)$, so by Theorem VIII.10 of \cite{RS} $\Op(f)$ is essentially selfadjoint. \\ 
Identity \eqref{deco} is a direct consequence of Corollary \ref{mainC}.  Clearly $\displaystyle\bigcup_k\sigma_{p}(\Op^k(f))\subseteq \sigma_{p}(\Op(f)).$ Conversely, if $\Op(f)\varphi=\lambda\varphi$ with $\varphi\neq 0$, then $\Op^k(f)P_k\varphi=\lambda P_k\varphi$ for each $k\in\mathbb{N}$. Since $\varphi\neq 0$, there is $k_0$ such that $P_{k_0}\varphi\neq 0$, therefore $\lambda\in\sigma_{p}(\Op^{k_0}(f))$ and $\sigma_{p}(\Op(f))=\displaystyle\bigcup_{k}\sigma_{p}(\Op^k(f))$. It is well known that 
$$\sigma(\Op(f))=\overline{\bigcup_k\sigma(\Op^k(f))}=\overline{\bigcup_k\sigma_p(\Op^k(f))}.$$ 
Finally,  $\Op(f)$ is bounded iff $\displaystyle\sup_k{||\Op^k(f)||}=\sup_k\max\{|\lambda|:\lambda\in\sigma_p(\Op^k(f))\}<\infty$, and this completes the proof.
\end{proof}
\begin{Remark}
{\rm 
There exist tempered constants of motion $u_t$ such that $\Op(u_t)$ is the one parameter group $U(t)=e^{it\overline{\Op(f)}}$ (where $\overline{\Op(f)}$ is the closure of the operator $\Op(f)$). The latter claim follows from Theorem \ref{CCM_Th} in the next Section.
}
\end{Remark}
 The following result is a simple but interesting criteria for boundedness.
 \begin{proposition}\label{bounded}
If $f$ is a tempered constant of motion such that 
$$
\sup_{\alpha,\beta\in \mathbb{N}^n}|\langle\Op(f)\phi_\alpha,\phi_\beta\rangle| |\alpha|^{n-1}< \infty,
$$
then $\Op(f)$ defines a bounded operator on $L^2(\R^n)$.
\end{proposition}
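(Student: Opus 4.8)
The plan is to reduce the statement to an estimate on finite-dimensional blocks. Since $f$ is a tempered constant of motion, Corollary \ref{mainC} gives $\langle\Op(f)\phi_\alpha,\phi_\beta\rangle=0$ whenever $|\alpha|\neq|\beta|$; combining this with the explicit formula for $\Op(f)$ on $S(\R^n)$ obtained in the proof of Theorem \ref{StoS}, one sees that $\Op(f)$ maps each eigenspace $\H_k$ into itself, and that on $\H_k$ it is represented, in the orthonormal basis $\{\phi_\alpha:|\alpha|=k\}$, by the $d_k\times d_k$ matrix with entries $\langle\Op(f)\phi_\alpha,\phi_\beta\rangle$, $|\alpha|=|\beta|=k$. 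Write $\Op^k(f)$ for this restriction. Because $\varphi=\sum_kP_k\varphi$ converges in $S(\R^n)$ and $\Op(f)$ is continuous on $S(\R^n)$ by Theorem \ref{StoS}, for every $\varphi\in S(\R^n)$ one has $\Op(f)\varphi=\sum_k\Op^k(f)P_k\varphi$, the summands being mutually orthogonal.

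I would then bound each block crudely by its Hilbert--Schmidt norm: $\|\Op^k(f)\|\le\big(\sum_{|\alpha|=|\beta|=k}|\langle\Op(f)\phi_\alpha,\phi_\beta\rangle|^2\big)^{1/2}\le d_k\,\max_{|\alpha|=|\beta|=k}|\langle\Op(f)\phi_\alpha,\phi_\beta\rangle|$. Setting $C:=\sup_{\alpha,\beta}|\langle\Op(f)\phi_\alpha,\phi_\beta\rangle|\,|\alpha|^{n-1}<\infty$, the hypothesis yields $\|\Op^k(f)\|\le C\,d_k\,k^{-(n-1)}$ for $k\ge1$. Since $d_k=\binom{n+k-1}{k}$ is a polynomial in $k$ of degree exactly $n-1$, there is $C'>0$ with $d_k\le C'k^{n-1}$ for all $k\ge1$, hence $\|\Op^k(f)\|\le CC'$ for $k\ge1$; the block $k=0$ contributes the single finite number $|\langle\Op(f)\phi_0,\phi_0\rangle|$. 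Therefore $M:=\sup_k\|\Op^k(f)\|<\infty$.

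Finally, using $L^2(\R^n)=\bigoplus_k\H_k$ and the orthogonality of the blocks, for $\varphi\in S(\R^n)$ one gets $\|\Op(f)\varphi\|^2=\sum_k\|\Op^k(f)P_k\varphi\|^2\le M^2\sum_k\|P_k\varphi\|^2=M^2\|\varphi\|^2$; since $S(\R^n)$ is dense in $L^2(\R^n)$, $\Op(f)$ extends to a bounded operator of norm at most $M$. I do not expect a genuine obstacle here: the only delicate point is the bookkeeping showing that the hypothesis is exactly calibrated so that the decay rate $k^{-(n-1)}$ of the matrix entries compensates the growth $d_k\sim k^{n-1}/(n-1)!$ of the dimensions --- any slower decay would control each $\|\Op^k(f)\|$ individually but would not give the uniform bound in $k$ that boundedness on the whole space requires.
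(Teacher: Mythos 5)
Your proof is correct and takes essentially the same route as the paper: decompose $\Op(f)$ into the blocks $\Op^k(f)$ on $\H_k$ (via Corollary \ref{mainC}), bound $\|\Op^k(f)\|\le d_k\max_{|\alpha|=|\beta|=k}|\langle\Op(f)\phi_\alpha,\phi_\beta\rangle|$, and observe that the hypothesis exactly cancels the degree-$(n-1)$ polynomial growth of $d_k$. The paper's proof is terser --- it simply cites the criterion $\Op(f)$ bounded iff $\sup_k\|\Op^k(f)\|<\infty$ (from Theorem \ref{selfcm}) rather than re-deriving boundedness from the orthogonal block sum and density as you do --- but the two arguments are mathematically identical.
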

\begin{proof}
Recall that $\Op(f)$ is bounded iff $\sup_k{||\Op^k(f)||}<\infty$. Since each $\H_k$ is finite dimensional, thus $\displaystyle\|\Op^k(f)\|\leq d_k\max_{|\alpha|=|\beta|=k}|\langle\Op(f)\phi_\alpha,\phi_\beta\rangle|$. Therefore, the fact that $d_k$ is a polynomial on $k$ of degree $n-1$ implies our result.
\end{proof}

The following important result is consequence of proposition \ref{teo 1} and equation \eqref{metaW}. 

\begin{Corollary}\label{wcom} Weyl quantization intertwine the averaging of observables, that is $\Op(\tilde f)=\widetilde{\Op(f)}$.
\end{Corollary}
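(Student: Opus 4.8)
The plan is to integrate the metaplectic covariance of Weyl quantization over the circle. By Proposition \ref{teo 1} the metaplectic operator attached to the symplectomorphism $\varphi_t$ is $e^{itH_0}$, so equation \eqref{metaW} specialises to
$$
\Op(\varphi_t^{*}g)=e^{itH_0}\Op(g)e^{-itH_0}
$$
as continuous maps $S(\R^n)\to S'(\R^n)$, for every $g\in S'(\R^{2n})$ and every $t\in\R$; this is exactly the computation underlying Theorem \ref{PrCOM}, now without assuming $\varphi_t^{*}g=g$. Averaging this identity over $t\in\mathbb{T}$, the right-hand side becomes $\widetilde{\Op(f)}$ by the definition \eqref{AV_HO} (legitimate in the weak sense by Corollary \ref{QA_Def}, since $e^{\pm itH_0}$ preserve $S(\R^n)$), while the left-hand side should become $\Op\bigl(\frac{1}{2\pi}\int_{\mathbb{T}}\varphi_t^{*}f\,dt\bigr)=\Op(\tilde f)$. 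Thus the one point to check is that $\Op$ commutes with the $\mathbb{T}$-integral.

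I would make this precise by pairing against Wigner transforms, which turns everything into scalar integrals over the compact group $\mathbb{T}$. Fix $\phi,\psi\in S(\R^n)$, so $W(\psi,\phi)\in S(\R^{2n})$. Recall that $\tilde f=\frac{1}{2\pi}\int_{\mathbb{T}}\varphi_t^{*}f\,dt$ is a well-defined tempered distribution given by a Bochner integral, hence pairing it with the fixed Schwartz function $W(\psi,\phi)$ commutes with the integral; using the Weyl--Wigner duality \eqref{WeylWigner} and then $\langle\varphi_t^{*}f,g\rangle=\langle f,g\circ\varphi_t\rangle$ (immediate from Definition \ref{TCOM}, since $\varphi_t$ is real and volume preserving),
\begin{align*}
\langle\Op(\tilde f)\phi,\psi\rangle
&=\langle\tilde f,W(\psi,\phi)\rangle
=\frac{1}{2\pi}\int_{\mathbb{T}}\langle\varphi_t^{*}f,W(\psi,\phi)\rangle\,dt\\
&=\frac{1}{2\pi}\int_{\mathbb{T}}\langle f,W(\psi,\phi)\circ\varphi_t\rangle\,dt.
\end{align*}

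To finish I would invoke the dual form of \eqref{metaW} (again via Proposition \ref{teo 1}), namely the metaplectic covariance of the Wigner transform $W(\psi,\phi)\circ\varphi_t=W(e^{-itH_0}\psi,e^{-itH_0}\phi)$; substituting this and applying \eqref{WeylWigner} once more,
\begin{align*}
\langle\Op(\tilde f)\phi,\psi\rangle
&=\frac{1}{2\pi}\int_{\mathbb{T}}\langle\Op(f)e^{-itH_0}\phi,e^{-itH_0}\psi\rangle\,dt\\
&=\frac{1}{2\pi}\int_{\mathbb{T}}\langle e^{itH_0}\Op(f)e^{-itH_0}\phi,\psi\rangle\,dt
=\langle\widetilde{\Op(f)}\phi,\psi\rangle,
\end{align*}
and since $\phi,\psi\in S(\R^n)$ were arbitrary, $\Op(\tilde f)=\widetilde{\Op(f)}$. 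I expect the only genuinely non-formal step to be the interchange of the $\mathbb{T}$-integral with quantization (equivalently, with the pairing against $W(\psi,\phi)$); everything else is bookkeeping with the covariance relation, and the particular sign convention carried by \eqref{metaW} is irrelevant here, since averaging over all of $\mathbb{T}$ is invariant under $t\mapsto -t$.
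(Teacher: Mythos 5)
Your argument is correct and is precisely the one the paper has in mind: the corollary is stated there as an unproved consequence of Proposition~\ref{teo 1} together with the metaplectic covariance of Weyl quantization, and your pairing computation via \eqref{WeylWigner} and \eqref{metaW} is the standard way to make the interchange of the $\mathbb{T}$-average with $\Op$ rigorous. One small slip: in the opening "plan" paragraph you cite \eqref{metaW} for the identity $\Op(\varphi_t^*g)=e^{itH_0}\Op(g)e^{-itH_0}$, which is really equation~\eqref{meta} (the operator form of covariance); the rigorous portion of the proof then uses \eqref{metaW} correctly.
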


Finally, using Proposition \ref{teo 1}, we shall provide some interesting examples of constants of motion coming from the restriction of the metaplectic representation $\mu$ to the complex unitary group $U(n)$. The following result is a direct consequence of the comments after the proof of Proposition 4.75 in \cite{Fol},  but we shall approach it in a different way (without considering the Bargmann transform). 
\begin{Corollary}\label{metdecom}
$\mu(U(n))$ is a unitary group of constant of motion of $H_0$.
\end{Corollary}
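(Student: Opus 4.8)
The plan is to deduce the statement entirely from Proposition \ref{teo 1}, using one elementary group–theoretic fact. First I would recall that $U(n)$ does embed in the symplectic group: identifying $\R^{2n}\cong\C^n$, a $\C$-linear isometry $A$ of $\C^n$ is a real-linear map preserving the canonical symplectic form (the imaginary part of the Hermitian product), so $\mu|_{U(n)}$ makes sense and, since the metaplectic representation is a genuine unitary representation of the two-fold metaplectic cover, $\mu(U(n))$ is a group of unitary operators on $L^2(\R^n)$. The key observation is that under the identification $\R^{2n}\cong\C^n$ the Harmonic Oscillator flow is $\varphi_t(z)=e^{-it}z$, i.e. multiplication by the scalar $e^{-it}$; hence $\varphi_t$ lies in the \emph{center} of $U(n)$ and commutes with every $A\in U(n)$ inside the symplectic group. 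By Proposition \ref{teo 1} we also have $\mu(\varphi_t)=e^{itH_0}$.

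Next I would show that each $\mu(A)$, $A\in U(n)$, strongly commutes with $H_0$, equivalently commutes with $e^{itH_0}$ for all $t$. Working with lifts $\tilde A,\tilde\varphi_t$ of $A$ and $\varphi_t$ to the metaplectic cover and using that $\mu$ is there a homomorphism, the operator
\begin{equation*}
\zeta(A,t):=\mu(A)\,e^{itH_0}\,\mu(A)^{-1}\,e^{-itH_0}=\mu\bigl(\tilde A\,\tilde\varphi_t\,\tilde A^{-1}\,\tilde\varphi_t^{-1}\bigr)
\end{equation*}
is the $\mu$-image of an element of the kernel of the covering map, because $\tilde A\tilde\varphi_t\tilde A^{-1}\tilde\varphi_t^{-1}$ projects to $A\varphi_tA^{-1}\varphi_t^{-1}=I$ in the symplectic group. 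Hence $\zeta(A,t)\in\{I,-I\}$. On the other hand $t\mapsto\zeta(A,t)$ is strongly continuous (all four factors are, and $e^{itH_0}$ is strongly continuous by Proposition \ref{teo 1}) and $\zeta(A,0)=I$, so by connectedness $\zeta(A,t)=I$ for every $t$. Thus $e^{itH_0}\mu(A)e^{-itH_0}=\mu(A)$ for all $t$, i.e. $\mu(A)$ is a unitary constant of motion of $H_0$; since $A\in U(n)$ is arbitrary and $\mu(U(n))$ is a group of unitaries, this is exactly the assertion.

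The only delicate point is the phase ambiguity built into the metaplectic representation: the naive computation only shows that $\mu(A)$ and $e^{itH_0}$ commute up to a scalar of modulus one, and one must rule out a nontrivial scalar — this is precisely what the continuity-plus-discreteness argument above does, using that the metaplectic cocycle is $\{\pm1\}$-valued. As an alternative one could note that otherwise $\mu(A)H_0\mu(A)^{-1}$ would be a nonzero real translate of $H_0$, which is impossible since conjugation by a unitary preserves the spectrum and $\sigma(H_0)=\{k+\tfrac n2:k\in\mathbb N\}$. Everything else is routine.
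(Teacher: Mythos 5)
Your proof is correct and follows essentially the same route as the paper: both rest on Proposition \ref{teo 1} together with the observation that $\varphi_t$ is the scalar $e^{it}I$ in complex coordinates, hence central in $U(n)$, so that conjugating $S\in U(n)$ by $\varphi_t$ gives back $S$. The paper then finishes in one line by invoking the fact (recorded in Appendix \ref{WQ}) that $\mu$ restricted to $U(n)$ is a genuine (not merely projective) representation, so $e^{itH_0}\mu(S)e^{-itH_0}=\mu(\varphi_t\circ S\circ\varphi_{-t})=\mu(S)$ holds on the nose. You instead treat $\mu$ as a representation of the double cover and rule out a residual sign by noting that $\zeta(A,t)$ is a strongly continuous $\{\pm I\}$-valued function of $t$ equal to $I$ at $t=0$. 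This costs you a short extra argument, but buys a proof that does not rely on the (somewhat delicate) assertion that the metaplectic cocycle trivializes over $U(n)$; your alternative spectral argument at the end is also a clean way to dismiss a non-unimodular scalar, though once you have reduced to $\pm I$ the connectedness argument is the natural one. Either way the conclusion is sound.
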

\begin{proof}
Recall that the flow $\varphi_t$ in complex coordinates is given by $\varphi_t=e^{it} I$. In particular, for each $S\in U(n)$, we have that
$$
e^{itH_0}\mu(S) e^{-itH_0}=\mu\left(\varphi_t\circ S\circ\varphi_{-t}\right)=\mu (S).
$$ 
\end{proof}
 The previous corollary allow us to define the representations $\mu_k:U(n)\to\mathcal U
(\H_k)$ given by $\mu_k (S)=\mu (S)|_{\H_k}$, for each $k\in\mathbb{N}$. Abusing of the notation, we also denote by $\mu$ the restriction  of the metaplectic representation to $U(n)$. Hence, $\mu$ admits the decomposition
$$
\mu=\bigoplus_{k}\mu_k.
$$
It is well known that each $\mu_k$ is irreducible (for instance \cite{Ig}). Moreover, since $d_k=\displaystyle\binom{n+k-1}{k}$, we expect that $\mu_1$ is equivalent to the canonical representation of $U(n)$ and $\mu_k$ is the corresponding symmetric $k$-th power representation.   

Let $\mathfrak{u}(n)$ the Lie algebra of $U(n)$ and $\pi$ be any strongly continuous unitary representation. For any $A\in \mathfrak{u}(n)$, we can define the strongly continuous one parameter group $U(t)=\pi(e^{itA})$. Stone's theorem implies $U(t)$ has a selfadjoint infinitesimal generator which we denote by $\de\pi(A)$. The map $\de\pi$ can be extended to the enveloping algebra of $\mathfrak{u}(n)$, but selfadjointness is not automatically guaranteed (ellipticity might be required, see chapter VI in \cite{Mau}). For simplicity, we shall not consider such general case in this article.

The following result is a direct consequence of corollary \ref{metdecom}. 
\begin{Corollary}\label{metdecom1}
For each $A\in \mathfrak{u}(n)$, the selfadjoint operator $\de\mu(A)$ is a constant of motion essentially selfadjoint on $S(\R^n)$. Moreover, the map $\de\mu$ admits the decomposition
$$
\de\mu =\bigoplus_{k}\de\mu_k.
$$
\end{Corollary}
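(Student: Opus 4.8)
The plan is to reduce the statement to Theorem \ref{selfcm} by identifying $\de\mu(A)$, for $A\in\mathfrak u(n)$, with the Weyl quantization of a real quadratic classical constant of motion. For $A\in\mathfrak u(n)$ the one parameter subgroup $e^{itA}$ lies in $U(n)\subset Sp(2n,\R)$, and the metaplectic representation sends it, by the quadratic-Hamiltonian correspondence underlying Proposition \ref{teo 1} and equation \eqref{meta} (for $A$ a multiple of the identity one recovers $\varphi_t=e^{it}I$ and $H_0=\Op(h_0)$), to the unitary group $t\mapsto e^{it\Op(q_A)}$, where $q_A$ is the real quadratic form on $\R^{2n}$ whose Hamiltonian flow is $e^{itA}$. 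Since $q_A$ is invariant under $\varphi_t$, it is a linear combination of the monomials $m_{\alpha,\beta}$ of \eqref{conmon} with $|\alpha|=|\beta|=1$; in particular $q_A$ is a real tempered constant of motion, and $\de\mu(A)$ coincides on $S(\R^n)$ with $\Op(q_A)$.

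First I would make this identification precise, checking (as in the proof of Proposition \ref{teo 1}) that $\mu(e^{itA})$ preserves $S(\R^n)$ and that its infinitesimal action on $S(\R^n)$ is exactly $\Op(q_A)$. Given this, Theorem \ref{selfcm} applied to $f=q_A$ immediately yields that $\Op(q_A)$ is essentially selfadjoint on $S(\R^n)$ with closure decomposing as $\Op(q_A)=\bigoplus_k\Op^k(q_A)$ as in \eqref{deco}; since $\de\mu(A)$ is a selfadjoint extension of $\Op(q_A)|_{S(\R^n)}$ and the latter has a unique such extension, $\de\mu(A)=\overline{\Op(q_A)}$ and is essentially selfadjoint on $S(\R^n)$. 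The constant-of-motion property is immediate from Corollary \ref{metdecom}: $\mu(e^{itA})$ commutes with $e^{isH_0}$ for all $s,t\in\R$, so $\de\mu(A)$ strongly commutes with $H_0$. Finally, differentiating the already established decomposition $\mu=\bigoplus_k\mu_k$ of the restriction of the metaplectic representation to $U(n)$ identifies $\de\mu_k(A)=\Op^k(q_A)$, whence $\de\mu(A)=\bigoplus_k\Op^k(q_A)=\bigoplus_k\de\mu_k(A)$, which is the asserted decomposition (and is precisely \eqref{deco} for $q_A$).

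The only genuine work is the first step, the identification $\de\mu(A)|_{S(\R^n)}=\Op(q_A)$ together with the Schwartz-invariance of $\mu(e^{itA})$; everything afterwards is a citation of Theorem \ref{selfcm} and Corollary \ref{metdecom}. If one prefers to bypass the quadratic-symbol correspondence, the alternative is to rerun the argument of Theorem \ref{selfcm} directly for the group $U(t)=\mu(e^{itA})$: it preserves each finite dimensional $\H_k$ (by Corollary \ref{metdecom}) and maps $S(\R^n)$ into itself (as in the proof of Theorem \ref{StoS}), so the only missing ingredient is a polynomial bound $\|\de\mu_k(A)\|=O(k^{\,n-1})$, which one obtains either from $\mu_k\cong\mathrm{Sym}^k\mu_1$ (whence $\|\de\mu_k(A)\|\le k\,\|\de\mu_1(A)\|$) or from inequality \eqref{1ine} of Lemma \ref{Le1}. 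That estimate is the point I would expect to be the main obstacle along the direct route.
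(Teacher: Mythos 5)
The paper gives no explicit proof of this corollary --- it is presented as a ``direct consequence of Corollary \ref{metdecom}'', and the identification $\de\mu(A)=\Op(p_A)$ with $p_A$ a real quadratic form (your $q_A$) appears immediately after the statement as a supporting remark, with a citation to Theorem 4.45 of \cite{Fol}. Your first route is precisely this implicit argument spelled out: identify $\de\mu(A)$ with $\Op(q_A)$, observe that $q_A$ is a real polynomial (hence tempered) constant of motion since the flow $\varphi_t=e^{-it}I$ is central in $U(n)$, and then cite Theorem \ref{selfcm} for essential self-adjointness on $S(\R^n)$ and for the block decomposition \eqref{deco}, which becomes $\de\mu=\bigoplus_k\de\mu_k$. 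The constant-of-motion property you deduce correctly and directly from Corollary \ref{metdecom}. So the proposal is correct and matches the intent of the text; you have simply made the reasoning explicit where the paper leaves it implicit.

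Two small remarks on your alternative route. First, when you say the flow of $q_A$ is $e^{itA}$, note the paper's convention has the flow of $p_A$ equal to $e^{tA}$; this is just a matter of where the $i$ sits and does not affect anything. Second, your worry that the polynomial bound $\|\de\mu_k(A)\|=O(k^{n-1})$ is ``the main obstacle'' of the direct route is reasonable as stated, but it is easily disposed of: the Schwartz-invariance of $\mu(e^{itA})$ does \emph{not} need the mechanism of Theorem \ref{StoS} with a growth estimate, since $\mu(S)$ preserves $S(\R^n)$ for every $S\in Sp(n)$ by the standard structure theory of the metaplectic representation (it is built from the Fourier transform, precomposition with linear maps, and multiplication by $e^{iq(x)}$ with $q$ quadratic). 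What the polynomial bound is really needed for is the inclusion $S(\R^n)\subset D(\de\mu(A))$ if one tries to verify it coordinatewise through the decomposition $\bigoplus_k\de\mu_k(A)$; here both of your suggested fixes ($\mu_k\cong\mathrm{Sym}^k\mu_1$, giving $\|\de\mu_k(A)\|\le k\|\de\mu_1(A)\|$, or inequality \eqref{1ine} applied to the quadratic symbol $q_A$) work. But the quadratic-symbol route avoids this bookkeeping entirely, which is presumably why the paper records the identification $\de\mu(A)=\Op(p_A)$.
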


The operators $\de\mu(A)$ also comes from Weyl quantization. Indeed, $\de\mu(A)=\Op(p_A)$, where $p_A$ is the polynomial of degree 2 given by
$$
p_A(w)=-\frac{1}{2}w\cdot A\mathcal{J}\cdot w,
$$
and $\mathcal{J}$ is the canonical symplectic matrix. Moreover, the flow of $p_A$ is $e^{tA}$ (see theorem 4.45 in \cite{Fol}). We would like to obtain refine spectral information concerning the operators $\de\mu(A)$ using the latter decomposition, corollary \ref{selfcm} and representation theory. However, we shall leave the latter problem open for a future work.  
\section{Characterization of classical constants of motion and their Moyal product.}\label{char}

Recall that $\Op$ defines an isomorphism between the space of tempered distributions $S'(\R^{2n})$ and the space of continuous operators $\B[S(\R^n),S'(\R^n)]$. Moreover,  if $T\in B[S(\R^n),S'(\R^n)]$ then $\Op^{-1}(T)=\tilde W(K_T)$, where $K_T$ is the kernel of $T$ given by the Schwartz's kernel theorem. 
\begin{proposition}\label{MP}
Let $\Phi^{\alpha,\beta}=W(\phi_\beta,\phi_\alpha)$. For each  
$t\in\R$, we have that 
\begin{equation}\label{flow}
\Phi^{\alpha,\beta}\circ\varphi_t=e^{it(|\alpha|-|\beta|)}\Phi^{\alpha,\beta}.
\end{equation}
 In particular, $\Phi^{\alpha,\beta}$ is a constant of motion iff $|\alpha|=|\beta|$. Moreover, $\Op(\Phi^{\alpha,\beta})=P_{\alpha,\beta}$. 
\end{proposition}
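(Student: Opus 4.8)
The plan is to reduce everything to the single identity $\Op(\Phi^{\alpha,\beta}) = P_{\alpha,\beta}$, and then transport it through the one‑parameter group $e^{itH_0}$ using the metaplectic covariance of Weyl calculus together with the injectivity of $\Op$ on $S'(\R^{2n})$. First I would prove $\Op(\Phi^{\alpha,\beta}) = P_{\alpha,\beta}$ from the formula recalled just before the proposition, $\Op^{-1}(T) = \tilde W(K_T)$, where $K_T$ is the Schwartz kernel of $T$ and $\tilde W$ is the Wigner transform of a kernel, normalized so that $\tilde W(\psi\otimes\overline\phi) = W(\psi,\phi)$. Since $P_{\alpha,\beta}\colon\varphi\mapsto\langle\varphi,\phi_\alpha\rangle\phi_\beta$ has Schwartz kernel $\phi_\beta\otimes\overline{\phi_\alpha}$ — which equals $\phi_\beta\otimes\phi_\alpha$ because the Hermite functions are real — one gets $\Op^{-1}(P_{\alpha,\beta}) = \tilde W(\phi_\beta\otimes\overline{\phi_\alpha}) = W(\phi_\beta,\phi_\alpha) = \Phi^{\alpha,\beta}$. (If the appendix already records the identity $\Op(W(\psi,\phi)) = \langle\,\cdot\,,\phi\rangle\,\psi$, this step is immediate.) In particular $\Phi^{\alpha,\beta}\neq 0$, since $P_{\alpha,\beta}\neq 0$ and $\Op$ is injective.

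Next I would derive the flow identity \eqref{flow}. By Proposition \ref{teo 1} the metaplectic representation sends $\varphi_t$ to $e^{itH_0}$, so the metaplectic covariance of Weyl quantization (equation \eqref{meta}) yields $\Op(g\circ\varphi_t) = e^{-itH_0}\,\Op(g)\,e^{itH_0}$ for every $g\in S'(\R^{2n})$; with $g = \Phi^{\alpha,\beta}$ and the previous step this reads $\Op(\Phi^{\alpha,\beta}\circ\varphi_t) = e^{-itH_0}\,P_{\alpha,\beta}\,e^{itH_0}$. Because $\phi_\gamma\in\H_{|\gamma|}$ is an $H_0$‑eigenvector with eigenvalue $|\gamma| + \tfrac{n}{2}$, a one‑line computation — expand $e^{-itH_0}P_{\alpha,\beta}e^{itH_0}\varphi$ and use $\langle e^{itH_0}\varphi,\phi_\alpha\rangle = e^{it(|\alpha|+n/2)}\langle\varphi,\phi_\alpha\rangle$ together with $e^{-itH_0}\phi_\beta = e^{-it(|\beta|+n/2)}\phi_\beta$ — gives $e^{-itH_0}P_{\alpha,\beta}e^{itH_0} = e^{it(|\alpha|-|\beta|)}P_{\alpha,\beta}$. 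Hence $\Op(\Phi^{\alpha,\beta}\circ\varphi_t) = \Op\big(e^{it(|\alpha|-|\beta|)}\Phi^{\alpha,\beta}\big)$, and injectivity of $\Op$ gives \eqref{flow}. Alternatively one can skip $\Op$ and read \eqref{flow} directly off the metaplectic covariance of the cross‑Wigner transform, $W(\mu(S)\psi,\mu(S)\phi) = W(\psi,\phi)\circ S^{-1}$ with $S = \varphi_t$, using the sesquilinearity of $W$ and $\mu(\varphi_t)\phi_\gamma = e^{it(|\gamma|+n/2)}\phi_\gamma$.

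The remaining assertion is then immediate: by Definition \ref{TCOM}, $\Phi^{\alpha,\beta}$ is a tempered constant of motion iff $\Phi^{\alpha,\beta}\circ\varphi_t = \Phi^{\alpha,\beta}$ for all $t\in\R$, and since $\Phi^{\alpha,\beta}\neq 0$, identity \eqref{flow} shows this holds iff $e^{it(|\alpha|-|\beta|)} = 1$ for all $t$, i.e. iff $|\alpha| = |\beta|$.

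I do not expect a genuine obstacle here; the one point requiring care is the normalization bookkeeping — pinning down conventions so that it is $W(\phi_\beta,\phi_\alpha)$ (rather than $W(\phi_\alpha,\phi_\beta)$, or a scalar multiple of it) that corresponds to $P_{\alpha,\beta}$, and correspondingly keeping the sign of the exponent consistent across the correspondences $\varphi_t\leftrightarrow e^{itH_0}$, $\mu(S)\Op(a)\mu(S)^{-1} = \Op(a\circ S^{-1})$, and the conjugate‑linearity of $W$ in its second slot.
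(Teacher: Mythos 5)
Your proposal is correct and uses the same essential ingredients as the paper's proof: metaplectic covariance ($\mu(\varphi_t)=e^{itH_0}$ together with \eqref{metaW}/\eqref{meta}), the fact that $\phi_\gamma$ is an $H_0$-eigenvector with eigenvalue $|\gamma|+\tfrac n2$, and the kernel identification $\Op(W(\psi,\phi))=\langle\cdot,\phi\rangle\psi$. The paper's proof is exactly the ``alternative'' you sketch at the end of your second paragraph — reading \eqref{flow} directly off the cross-Wigner covariance from Folland's Proposition~4.28 and citing Folland's example for $\Op(\Phi^{\alpha,\beta})=P_{\alpha,\beta}$ — so your main route via $\Op(\Phi^{\alpha,\beta})=P_{\alpha,\beta}$ and injectivity of $\Op$ is just a mild reorganization of the same argument.
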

\begin{proof}
Proposition 4.28 in \cite{Fol} implies that  
$$
\Phi^{\alpha,\beta}\circ\varphi_t=W(\mu(\varphi_{-t})\phi_\beta,\mu(\varphi_{-t})\phi_\alpha)=e^{it(|\alpha|-|\beta|)}\Phi^{\alpha,\beta}.
$$
The identity $\Op(\Phi^{\alpha,\beta})=P_{\alpha,\beta}$ is a particular case of example (ii) in page 92 of \cite{Fol}.
\end{proof}
The following result is a direct consequence of Lemma \ref{Le1}.
\begin{Lemma}\label{Le2}
For any $f\in S'(\R^{2n})$, we have that
$$
f=\sum_{\alpha,\beta} \langle\Op(f)\phi_\alpha,\phi_\beta\rangle\Phi^{\alpha,\beta},
$$
where the convergence of the series holds in $S'(\R^{2n})$ with its canonical  topology. In addition, if $f\in S(\R^{2n})$ the convergence of the series holds in $S(\R^{2n})$.
\end{Lemma}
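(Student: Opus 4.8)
The plan is to derive Lemma \ref{Le2} by applying the inverse Weyl transform term by term to the kernel expansion already supplied by Lemma \ref{Le1}. Recall from that lemma that the Schwartz kernel of $\Op(f)$ is
$$
K_f=\sum_{\alpha,\beta}\langle\Op(f)\phi_\alpha,\phi_\beta\rangle\,\phi_\beta\otimes\phi_\alpha,
$$
where, by the estimate \eqref{1ine} and the $N$-representation theorem for $S'(\R^{2n})$ applied to the $2n$-dimensional Hermite basis $\{\phi_\beta\otimes\phi_\alpha\}_{(\alpha,\beta)\in\mathbb{N}^{2n}}$, the series converges in $S'(\R^{2n})$; if moreover $f\in S(\R^{2n})$, the estimate \eqref{2ine} together with the $N$-representation theorem for $S(\R^{2n})$ upgrades the convergence to $S(\R^{2n})$.

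Next I would record that $\phi_\beta\otimes\phi_\alpha$ is exactly the Schwartz kernel of the rank-one operator $P_{\alpha,\beta}$. Since the Hermite functions are real valued, $P_{\alpha,\beta}\phi=\langle\phi,\phi_\alpha\rangle\phi_\beta$ acts as $(P_{\alpha,\beta}\phi)(x)=\int\phi_\beta(x)\phi_\alpha(y)\phi(y)\,dy$, so its kernel is $\phi_\beta\otimes\phi_\alpha$. Combining this with the identity $\Op(\Phi^{\alpha,\beta})=P_{\alpha,\beta}$ from Proposition \ref{MP} and the formula $\Op^{-1}(T)=\tilde W(K_T)$ recalled just before that proposition yields $\tilde W(\phi_\beta\otimes\phi_\alpha)=\Phi^{\alpha,\beta}$.

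The last step is a continuity argument. Since $\Op$ is a topological isomorphism from $S'(\R^{2n})$ onto $\B[S(\R^n),S'(\R^n)]$ with inverse $T\mapsto\tilde W(K_T)$, and the Schwartz kernel map $T\mapsto K_T$ is a topological isomorphism from $\B[S(\R^n),S'(\R^n)]$ onto $S'(\R^{2n})$, the composite $K\mapsto\tilde W(K)$ is a continuous linear map on $S'(\R^{2n})$; the analogous identifications show it restricts to a continuous linear map on $S(\R^{2n})$. Applying this map to the convergent expansion of $K_f$ and using $f=\tilde W(K_f)$ gives
$$
f=\tilde W(K_f)=\sum_{\alpha,\beta}\langle\Op(f)\phi_\alpha,\phi_\beta\rangle\,\tilde W(\phi_\beta\otimes\phi_\alpha)=\sum_{\alpha,\beta}\langle\Op(f)\phi_\alpha,\phi_\beta\rangle\,\Phi^{\alpha,\beta},
$$
with convergence in $S'(\R^{2n})$ in general and in $S(\R^{2n})$ when $f\in S(\R^{2n})$.

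Since Lemma \ref{Le1} already carries the analytic weight, I do not anticipate a genuine obstacle here. The only points demanding care are the bookkeeping of complex conjugates in identifying $\phi_\beta\otimes\phi_\alpha$ as the kernel of $P_{\alpha,\beta}$ — harmless because the Hermite functions are real — and quoting the continuity of $K\mapsto\tilde W(K)$ in the correct topology in each case ($S'$ for the general statement, $S$ for the Schwartz statement), both of which are immediate from the isomorphism statements recalled before Proposition \ref{MP}.
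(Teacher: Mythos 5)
Your proposal is correct and takes essentially the approach the paper intends: the paper simply labels Lemma \ref{Le2} ``a direct consequence of Lemma \ref{Le1},'' and your argument supplies exactly the missing details — identifying $\tilde W(\phi_\beta\otimes\phi_\alpha)=\Phi^{\alpha,\beta}$ via Proposition \ref{MP} (or directly, since the Hermite functions are real), and then applying the continuous automorphism $\tilde W$ of $S'(\R^{2n})$ (resp.\ $S(\R^{2n})$) term by term to the convergent kernel expansion \eqref{kernelg}.
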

The following results provide a complete characterization of tempered, square integrable and rapidly decreasing constants of motion.
\begin{Theorem}\label{CCM_Th}
For any tempered constant of motion $f$, we have that
$$
f=\sum_{|\alpha|=|\beta|} \langle\Op(f)\phi_\alpha,\phi_\beta\rangle\Phi^{\alpha,\beta},
$$
where the convergence of the series holds in $S'(\R^{2n})$ with its canonical  topology. Conversely, if $C_{\alpha,\beta}$ is a family of constants such that  there is a constant $C>0$ and $\gamma\in \mathbb{N}^{2n}$ satisfying
$$
|C_{\alpha,\beta}|\leq C((\alpha,\beta)+1)^\gamma,
$$
then the series $\sum_{|\alpha|=|\beta|} C_{\alpha,\beta}\Phi^{\alpha,\beta}$ converges on $S'(\R^{2n})$ to a tempered constant of motion. Similarly,  if $C_{\alpha,\beta}$ is a family of constants such that, for each $m\in\mathbb{N}$, we have that 
$$
\sup_{\alpha,\beta\in I^n}{|C_{\alpha,\beta}|\,(|\alpha|+|\beta|)^m}< \infty,
$$
then the series $\sum_{|\alpha|=|\beta|} C_{\alpha,\beta}\Phi^{\alpha,\beta}$ converges on $S(\R^{2n})$ to a rapidly decreasing constant of motion.
\end{Theorem}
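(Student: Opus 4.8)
The plan is to treat the displayed expansion (the direct part) and the two converse statements separately; each reduces to the $N$-representation theorem (Theorems V.13 and V.14 of \cite{RS}) carried through the Weyl correspondence, together with Corollary \ref{mainC} and Proposition \ref{MP}. For the direct part there is essentially nothing to do: if $f$ is a tempered constant of motion, Lemma \ref{Le2} gives $f=\sum_{\alpha,\beta}\langle\Op(f)\phi_\alpha,\phi_\beta\rangle\,\Phi^{\alpha,\beta}$ in $S'(\R^{2n})$, and Corollary \ref{mainC} annihilates every term with $|\alpha|\neq|\beta|$, leaving exactly the claimed formula.

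For the converse I would first realise the proposed symbol as the Weyl dequantization of a concrete operator. Recall that $\{\phi_\beta\otimes\phi_\alpha\}_{\alpha,\beta}$ is the Hermite orthonormal basis of $L^2(\R^{2n})$, that $\tilde W$ is a topological isomorphism of $S'(\R^{2n})$ which restricts to an isomorphism of $S(\R^{2n})$, and that $\tilde W(\phi_\beta\otimes\phi_\alpha)=\Phi^{\alpha,\beta}$ (indeed $\Op(\Phi^{\alpha,\beta})=P_{\alpha,\beta}$ by Proposition \ref{MP}, and $\phi_\beta\otimes\phi_\alpha$ is the kernel of $P_{\alpha,\beta}$, the Hermite functions being real). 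Given constants $C_{\alpha,\beta}$ supported on $|\alpha|=|\beta|$ with $|C_{\alpha,\beta}|\le C((\alpha,\beta)+1)^\gamma$ — equivalently, up to a constant and a relabelling of $\gamma$, $|C_{\alpha,\beta}|\le C(1+|\alpha|+|\beta|)^{|\gamma|}$ — Theorem V.13 of \cite{RS} shows that $\sum_{|\alpha|=|\beta|}C_{\alpha,\beta}\,\phi_\beta\otimes\phi_\alpha$ converges in $S'(\R^{2n})$ to the kernel of some $T\in\B[S(\R^n),S'(\R^n)]$; applying $\tilde W$ produces $f:=\Op^{-1}(T)\in S'(\R^{2n})$ with $f=\sum_{|\alpha|=|\beta|}C_{\alpha,\beta}\Phi^{\alpha,\beta}$ in $S'(\R^{2n})$, and comparing with the kernel expansion \eqref{kernelg} gives $\langle\Op(f)\phi_\alpha,\phi_\beta\rangle=C_{\alpha,\beta}$. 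When instead $\sup_{\alpha,\beta}|C_{\alpha,\beta}|(|\alpha|+|\beta|)^m<\infty$ for every $m$, the identical argument with Theorem V.14 of \cite{RS} in place of V.13 yields $f\in S(\R^{2n})$ with the series converging in $S(\R^{2n})$.

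In either case it remains to verify that $f$ is a constant of motion. Each partial sum $\sum_{|\alpha|=|\beta|\le N}C_{\alpha,\beta}\Phi^{\alpha,\beta}$ is a finite linear combination of the $\Phi^{\alpha,\beta}$ with $|\alpha|=|\beta|$, which by \eqref{flow} satisfy $\varphi_t^*\Phi^{\alpha,\beta}=\Phi^{\alpha,\beta}$; since $\varphi_t^*$ is the transpose of the topological automorphism $g\mapsto g\circ\varphi_t$ of $S(\R^{2n})$, it is continuous on $S'(\R^{2n})$ and restricts to a continuous map of $S(\R^{2n})$, so letting $N\to\infty$ gives $\varphi_t^*f=f$. (Alternatively, the computation in the proof of Corollary \ref{mainC} shows that for any $f\in S'(\R^{2n})$ the matrix coefficients of $e^{itH_0}\Op(f)e^{-itH_0}$ are those of $\Op(f)$ multiplied by $e^{it(|\alpha|-|\beta|)}$, which equals $1$ on the support of $(C_{\alpha,\beta})$; then Lemma \ref{Le1} forces $e^{itH_0}\Op(f)e^{-itH_0}=\Op(f)$, and \eqref{meta} gives $\varphi_t^*f=f$.)

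I do not expect a genuine obstacle here: the statement is essentially a dictionary between the coefficient conditions of the $N$-representation theorem and properties of $\Op$. The two points that need a little care are matching the weight $((\alpha,\beta)+1)^\gamma$ used here with the one appearing in \cite{RS}, and justifying that $\varphi_t^*$ may be interchanged with the $S'$- (respectively $S$-) convergent series — both are routine once one recalls that $\varphi_t$ is a linear symplectic automorphism of $\R^{2n}$.
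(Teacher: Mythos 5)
Your proof is correct and is exactly the argument the paper implicitly relies on (Theorem \ref{CCM_Th} is stated without a proof block, following Lemma \ref{Le2}, Corollary \ref{mainC}, Proposition \ref{MP}, and the $N$-representation theorem). The direct part via Lemma \ref{Le2} and Corollary \ref{mainC}, the converses via the $N$-representation theorem transported through the topological isomorphism $\tilde W$, and the constant-of-motion verification by continuity of $\varphi_t^*$ on $S'(\R^{2n})$ (or on $S(\R^{2n})$) are all the intended steps.
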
.

The following result is the square integrable version of the latter theorem.
\begin{proposition}\label{CCM_Pr}
The set $\{\Phi^{\alpha,\beta}\}_{|\alpha|=|\beta|}$ forms a orthogonal basis of the Hilbert space of square integrable constants of motion.
\end{proposition}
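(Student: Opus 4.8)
The plan is to push the whole statement through the Weyl isomorphism $\Op\colon L^2(\R^{2n})\to\mathrm{HS}$, where $\mathrm{HS}$ denotes the Hilbert--Schmidt operators on $L^2(\R^n)$, so that it reduces to an elementary fact about rank-one operators together with Corollary \ref{mainC}.

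First I would record that, by Proposition \ref{MP}, $\Op(\Phi^{\alpha,\beta})=P_{\alpha,\beta}$, and that since $\{\phi_\alpha\}_{\alpha\in\mathbb N^n}$ is an orthonormal basis of $L^2(\R^n)$ the family $\{P_{\alpha,\beta}\}_{\alpha,\beta\in\mathbb N^n}$ is an orthonormal basis of $\mathrm{HS}$: a one-line computation gives $\langle P_{\alpha,\beta},P_{\alpha',\beta'}\rangle_{\mathrm{HS}}=\delta_{\alpha\alpha'}\delta_{\beta\beta'}$, and completeness follows from the identity $\langle A,P_{\alpha,\beta}\rangle_{\mathrm{HS}}=\langle A\phi_\alpha,\phi_\beta\rangle$, valid for every $A\in\mathrm{HS}$. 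Since $\Op$ is a unitary isomorphism, it follows that $\{\Phi^{\alpha,\beta}\}_{\alpha,\beta\in\mathbb N^n}$ is an orthogonal basis of $L^2(\R^{2n})$ (orthonormal up to the fixed normalization constant of $\Op$); in particular each $\Phi^{\alpha,\beta}$ does belong to $L^2(\R^{2n})$, being the $\Op$-preimage of the Hilbert--Schmidt (rank-one) operator $P_{\alpha,\beta}$.

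Next I would isolate the subspace in question. Write $\A_2$ for the space of square integrable constants of motion. Since $\varphi_t$ preserves volume, each $\varphi^*_t$ is unitary on $L^2(\R^{2n})$, and $t\mapsto\varphi^*_tf$ is continuous; hence $\A_2=\bigcap_{t\in\R}\ker(\varphi^*_t-I)$ is a closed subspace of $L^2(\R^{2n})$, i.e. a Hilbert space. By \eqref{flow}, $\Phi^{\alpha,\beta}\in\A_2$ precisely when $|\alpha|=|\beta|$, so $\{\Phi^{\alpha,\beta}\}_{|\alpha|=|\beta|}$ is an orthogonal subset of $\A_2$. For completeness inside $\A_2$, take $f\in\A_2$; since $L^2(\R^{2n})\subset S'(\R^{2n})$, $f$ is in particular a tempered constant of motion, so by the orthogonal expansion of the previous paragraph $f=\sum_{\alpha,\beta}c_{\alpha,\beta}\Phi^{\alpha,\beta}$ with $c_{\alpha,\beta}$ proportional to $\langle f,\Phi^{\alpha,\beta}\rangle=\langle\Op(f),P_{\alpha,\beta}\rangle_{\mathrm{HS}}=\langle\Op(f)\phi_\alpha,\phi_\beta\rangle$, and Corollary \ref{mainC} forces $c_{\alpha,\beta}=0$ whenever $|\alpha|\neq|\beta|$. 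Thus $f=\sum_{|\alpha|=|\beta|}c_{\alpha,\beta}\Phi^{\alpha,\beta}$ lies in the closed span of $\{\Phi^{\alpha,\beta}\}_{|\alpha|=|\beta|}$, which together with the orthogonality noted above finishes the proof.

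There is no serious obstacle here: the content is entirely carried by the facts already established, namely the Weyl/Hilbert--Schmidt unitary isomorphism, Proposition \ref{MP}, and Corollary \ref{mainC}. The only points demanding a little care are the bookkeeping of the normalization constant of $\Op$ — which is why the statement asserts an \emph{orthogonal}, rather than orthonormal, basis — and the remark that a square integrable constant of motion is automatically a tempered one, so that Corollary \ref{mainC} is applicable.
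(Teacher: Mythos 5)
Your proof is correct and follows essentially the same route as the paper's, only spelled out in much greater detail. The paper's argument is a one-liner: completeness is left implicit (it follows from the expansion machinery already set up in Lemma \ref{Le2} and Theorem \ref{CCM_Th}), and orthogonality is cited from Proposition 1.92 of Folland, which is precisely the Moyal/Wigner orthogonality relation $\langle W(u_1,v_1),W(u_2,v_2)\rangle = (2\pi)^n\langle u_1,u_2\rangle\overline{\langle v_1,v_2\rangle}$ (equivalently, the statement that $\Op$ is a multiple of a unitary from $L^2(\R^{2n})$ onto the Hilbert--Schmidt class). You derive that same fact from the Hilbert--Schmidt isomorphism and $\Op(\Phi^{\alpha,\beta})=P_{\alpha,\beta}$, so the two orthogonality arguments are the same theorem seen from two ends. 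Your treatment of completeness is the genuinely more careful part: identifying $\A_2=\bigcap_t\ker(\varphi_t^*-I)$ as a closed subspace, and then killing the cross-coefficients, are details the paper skips.

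One small remark on economy: your appeal to Corollary \ref{mainC} for the vanishing of $\langle f,\Phi^{\alpha,\beta}\rangle$ when $|\alpha|\neq|\beta|$ is valid, but it quietly routes a purely classical fact through the quantization. A shorter, entirely classical path is available at that point: since $\varphi_t^*$ is unitary on $L^2(\R^{2n})$ and $f$ is invariant,
\begin{equation*}
\langle f,\Phi^{\alpha,\beta}\rangle = \langle \varphi_t^*f,\varphi_t^*\Phi^{\alpha,\beta}\rangle = e^{-it(|\alpha|-|\beta|)}\langle f,\Phi^{\alpha,\beta}\rangle
\end{equation*}
for all $t$, by \eqref{flow}, which forces $\langle f,\Phi^{\alpha,\beta}\rangle=0$ unless $|\alpha|=|\beta|$. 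Either way, the proposition stands.
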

\begin{proof}
We only need to check orthogonality, but this is a direct consequence of Proposition 1.92 in \cite{Fol}.
\end{proof}

The following result is another interesting consequence of equation \eqref{flow} and Lemma \ref{Le2}. 
\begin{proposition}\label{WeinsteinC}
For any $f\in S'(\R^{2n})$ we have that
$$
\hat f=\sum_{|\alpha|=|\beta|} \langle\Op(f)\phi_\alpha,\phi_\beta\rangle\Phi^{\alpha,\beta}.
$$
\end{proposition}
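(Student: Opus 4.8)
The plan is to obtain $\hat f$ by applying the averaging operator $P\colon T\mapsto\frac{1}{2\pi}\int_{\mathbb T}\varphi_t^*(T)\,\de t$ term by term to the expansion of $f$ furnished by Lemma \ref{Le2}. First I would record that $P$ is a continuous linear endomorphism of $S'(\R^{2n})$: by Definition \ref{TCOM} each $\varphi_t^*$ is continuous on $S'(\R^{2n})$, the map $t\mapsto\varphi_t^*(T)$ is continuous, and the integral is a Bochner integral over the compact group $\mathbb T$, exactly as in the discussion following Definition \ref{TCOM} in Subsection \ref{HO}. Consequently $P$ commutes with any series that converges in $S'(\R^{2n})$ with its canonical topology.

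Next I would apply $P$ to the identity $f=\sum_{\alpha,\beta}\langle\Op(f)\phi_\alpha,\phi_\beta\rangle\,\Phi^{\alpha,\beta}$ of Lemma \ref{Le2}, which converges in $S'(\R^{2n})$, obtaining $\hat f=\sum_{\alpha,\beta}\langle\Op(f)\phi_\alpha,\phi_\beta\rangle\,P(\Phi^{\alpha,\beta})$. It then remains to evaluate $P(\Phi^{\alpha,\beta})$. Since $\Phi^{\alpha,\beta}=W(\phi_\beta,\phi_\alpha)\in S(\R^{2n})$ and $\varphi_t$ is a volume-preserving linear symplectomorphism, $\varphi_t^*(\Phi^{\alpha,\beta})$ is the function $\Phi^{\alpha,\beta}\circ\varphi_{-t}$, and equation \eqref{flow} gives $\varphi_t^*(\Phi^{\alpha,\beta})=e^{it(|\beta|-|\alpha|)}\Phi^{\alpha,\beta}$. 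Integrating over $t\in\mathbb T$ and using $\frac{1}{2\pi}\int_0^{2\pi}e^{itm}\,\de t=\delta_{m,0}$, I get $P(\Phi^{\alpha,\beta})=\Phi^{\alpha,\beta}$ when $|\alpha|=|\beta|$ and $P(\Phi^{\alpha,\beta})=0$ otherwise. Substituting this back yields precisely $\hat f=\sum_{|\alpha|=|\beta|}\langle\Op(f)\phi_\alpha,\phi_\beta\rangle\,\Phi^{\alpha,\beta}$.

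The only genuinely delicate point is the justification that $P$ is continuous on $S'(\R^{2n})$ and therefore passes through the series of Lemma \ref{Le2}, whose convergence is only topological, not absolute; everything else reduces to the bookkeeping of the geometric integral $\frac{1}{2\pi}\int_0^{2\pi}e^{itm}\,\de t$. One should also keep track of the sign convention relating $\varphi_t^*$ to composition with $\varphi_t$, but since only the vanishing or non-vanishing of the scalar $\frac{1}{2\pi}\int_0^{2\pi}e^{it(|\alpha|-|\beta|)}\,\de t$ enters the final formula, the statement is insensitive to it. As a consistency check, the result says that $\hat f$ is obtained from the Lemma \ref{Le2} expansion of $f$ simply by discarding all terms with $|\alpha|\neq|\beta|$, which agrees with Corollary \ref{mainC} and Theorem \ref{CCM_Th}: $\hat f$ is a tempered constant of motion and $\langle\Op(\hat f)\phi_\alpha,\phi_\beta\rangle=\langle\Op(f)\phi_\alpha,\phi_\beta\rangle$ whenever $|\alpha|=|\beta|$.
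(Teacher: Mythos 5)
Your proposal is correct and follows essentially the same route as the paper: apply the averaging operator termwise to the expansion $f=\sum_{\alpha,\beta}\langle\Op(f)\phi_\alpha,\phi_\beta\rangle\Phi^{\alpha,\beta}$ of Lemma \ref{Le2}, observe via equation \eqref{flow} that the average of $\Phi^{\alpha,\beta}$ is $\Phi^{\alpha,\beta}$ when $|\alpha|=|\beta|$ and zero otherwise, and invoke continuity of the average map on $S'(\R^{2n})$ to justify the interchange with the series. You are slightly more explicit than the paper about the continuity of the averaging operator and the sign convention in $\varphi_t^*$, but the core argument is identical.
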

\begin{proof}
Notice that, 
$$
\int_{\mathbb{T}}\Phi^{\alpha,\beta}\circ\varphi_t \de t=\Phi^{\alpha,\beta}\int_{\mathbb{T}}e^{it(|\alpha|-|\beta|)}\de t.
$$ 
Thus, $\widetilde{\Phi^{\alpha,\beta}}=\Phi^{\alpha,\beta}$ when $|\alpha|=|\beta|$ and it vanishes otherwise. Therefore, the continuity of the average map implies that
$$
\tilde f=\sum_{\alpha,\beta} \langle\Op(f)\phi_\alpha,\phi_\beta\rangle\widetilde{\Phi^{\alpha,\beta}}=
\sum_{|\alpha|=|\beta|} \langle\Op(f)\phi_\alpha,\phi_\beta\rangle\Phi^{\alpha,\beta}.
$$
\end{proof}
The following result is the quantum version of the previous one.
\begin{proposition}\label{WeinsteinQ}
For any linear continuous operator $F:S'(\R^{2n})\to S'(\R^{2n})$ we have that
$$
\tilde{F}\phi_\alpha =P_{|\alpha|}F\phi_{\alpha}.
$$
Moreover, the kernel $K_{\tilde{F}}$ of $\tilde{F}$ is 
$$K_{\tilde{F}}=\sum_{|\alpha|=|\beta|}\langle F\phi_{\alpha},\phi_{\beta}\rangle \phi_{\beta}\otimes \phi_{\alpha}$$
\end{proposition}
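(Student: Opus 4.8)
The plan is to evaluate $\tilde F$ directly on the Hermite basis from the defining Bochner integral \eqref{AV_HO}, exploiting that each $\phi_\gamma$ is an eigenfunction of $H_0$. Recall from Corollary \ref{QA_Def} that $\tilde F$ is a well-defined continuous operator on $S'(\R^n)$, and from Proposition \ref{teo 1} that $e^{itH_0}$ is strongly continuous on $S'(\R^n)$; moreover, since $e^{itH_0}$ is the $S'$-extension of the $L^2$ unitary group and the $\phi_\gamma$ are real eigenfunctions, one has $\langle e^{itH_0}T,\phi_\gamma\rangle = e^{it(|\gamma|+n/2)}\langle T,\phi_\gamma\rangle$ for every $T\in S'(\R^n)$.

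First I would fix $\alpha,\beta\in\mathbb N^n$ and compute, exactly as in the proof of Corollary \ref{mainC}, that $\langle e^{itH_0}Fe^{-itH_0}\phi_\alpha,\phi_\beta\rangle = e^{it(|\alpha|-|\beta|)}\,\langle F\phi_\alpha,\phi_\beta\rangle$, using $e^{-itH_0}\phi_\alpha = e^{-it(|\alpha|+n/2)}\phi_\alpha$ and the displayed action of $e^{itH_0}$ against $\phi_\beta$. Since this is a continuous scalar function of $t\in\mathbb T$, I may pair the Bochner integral \eqref{AV_HO} with $\phi_\beta$ and integrate, obtaining $\langle\tilde F\phi_\alpha,\phi_\beta\rangle = \bigl(\tfrac{1}{2\pi}\int_{\mathbb T}e^{it(|\alpha|-|\beta|)}\,\de t\bigr)\langle F\phi_\alpha,\phi_\beta\rangle$, which equals $\langle F\phi_\alpha,\phi_\beta\rangle$ when $|\alpha|=|\beta|$ and vanishes otherwise.

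Next I would reconstruct $\tilde F\phi_\alpha$ from these pairings. By Lemma \ref{Le1} the coefficients $\langle F\phi_\alpha,\phi_\beta\rangle$ grow at most polynomially in $\beta$, so $\sum_\beta\langle F\phi_\alpha,\phi_\beta\rangle\phi_\beta$ converges to $F\phi_\alpha$ in $S'(\R^n)$ by the $N$-representation theorem; applying the continuous finite-rank projection $P_{|\alpha|}$ and using that the shell $\{|\beta|=|\alpha|\}$ is finite gives $P_{|\alpha|}F\phi_\alpha = \sum_{|\beta|=|\alpha|}\langle F\phi_\alpha,\phi_\beta\rangle\phi_\beta$. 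Since $\{\phi_\beta\}_\beta$ is total in $S(\R^n)$ and $\tilde F\phi_\alpha\in S'(\R^n)$, the pairing computation of the previous paragraph forces $\tilde F\phi_\alpha = P_{|\alpha|}F\phi_\alpha$, which is the first assertion. For the kernel identity I would argue verbatim as in Lemma \ref{Le1}: $K_{\tilde F}$ is characterized by $\langle K_{\tilde F},\psi\otimes\overline\phi\rangle = \langle\tilde F\phi,\psi\rangle$ for $\phi,\psi\in S(\R^n)$; expanding $\phi=\sum_\alpha\langle\phi,\phi_\alpha\rangle\phi_\alpha$ in $S(\R^n)$ by the $N$-representation theorem, using continuity of $\tilde F$ together with the formula just proved, yields $\langle\tilde F\phi,\psi\rangle = \sum_{|\alpha|=|\beta|}\langle F\phi_\alpha,\phi_\beta\rangle\langle\phi,\phi_\alpha\rangle\langle\phi_\beta,\psi\rangle$, i.e. $K_{\tilde F}=\sum_{|\alpha|=|\beta|}\langle F\phi_\alpha,\phi_\beta\rangle\,\phi_\beta\otimes\phi_\alpha$, with convergence in $S'(\R^{2n})$.

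The only delicate point is commuting the $\mathbb T$-integral with the Hermite expansion of $F\phi_\alpha$; I sidestep it by testing against a single $\phi_\beta$ first, where the integrand becomes the explicit scalar $e^{it(|\alpha|-|\beta|)}\langle F\phi_\alpha,\phi_\beta\rangle$, and summing over $\beta$ only afterwards. All the ingredients — strong continuity of $e^{itH_0}$ on $S'(\R^n)$ and well-definedness of the average (Proposition \ref{teo 1}, Corollary \ref{QA_Def}), the polynomial control on the coefficients and the Schwartz-kernel reconstruction (Lemma \ref{Le1}) — are already available, so the argument requires no new estimates.
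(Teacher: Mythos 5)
Your proof is correct and takes essentially the same approach as the paper: both exploit the eigenfunction relation $e^{-itH_0}\phi_\alpha = e^{-it(|\alpha|+n/2)}\phi_\alpha$ together with the scalar integral $\int_{\mathbb{T}}e^{ikt}\,\de t = 2\pi\delta_{k,0}$. The paper packages this as the operator identity $\tfrac{1}{2\pi}\int_{\mathbb{T}}e^{-it(H_0-|\alpha|-n/2)}\,\de t = P_{|\alpha|}$ applied to $F\phi_\alpha$, while you pair the integrand against a fixed $\phi_\beta$ and reconstruct from totality afterwards, which is the same computation read weakly; you also spell out the kernel step, which the paper's displayed proof leaves implicit.
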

\begin{proof}
Note that 
$$\int_{\mathbb{T}}e^{-it\left(H_{0}-k-\frac{n}{2}\right)}\phi_{\alpha} \ dt=\int_{\mathbb{T}}e^{-it\left(|\alpha|-k\right)}\phi_{\alpha} \ dt=\left[\int_{\mathbb{T}}e^{-it\left(|\alpha|-k\right)}dt\right]\phi_{\alpha}=2\pi\delta_{k,|\alpha|}\phi_{\alpha}.$$
Therefore 
$$\frac{1}{2\pi}\int_{\mathbb{T}}e^{-it\left(H_{0}-k-\frac{n}{2}\right)} \ dt=P_{k},$$
in consequence
$$\tilde{F}\phi_{\alpha}=\frac{1}{2\pi}\int_{\mathbb{T}}e^{itH_0}Fe^{-itH_0}\phi_\alpha \ dt=\frac{1}{2\pi}\int_{\mathbb{T}}e^{-it\left(H_{0}-|\alpha|-\frac{n}{2}\right)} F\phi_\alpha \ dt=P_{|\alpha|}F\phi_{\alpha}.$$
\end{proof}
\subsection{The Moyal product of constants of motion.}\label{MSUB}
In this Section we consider the so called Moyal product. It is clear that if $f\in S(\R^{2n})$, then $\Op(f):S(\R^n)\to S(\R^n)$. Then for each $f,g\in S(\R^{2n})$, there is $h\in S(\R^{2n})$ such that $\Op(f)\Op(g)=\Op(h)$. The Moyal product of $f$ and $g$ is by definition $f\star g=h$. It turns out that $\star$ defines a non-commutative associative product. For example, Proposition \ref{MP} implies that
\begin{equation}\label{wignermoyal}
\Phi^{\alpha,\beta}\star \Phi^{\alpha',\beta'}=\delta_{\alpha',\beta}\Phi^{\alpha,\beta'}
\end{equation}
Once Planck's constant $\hb$ is introduced (see Appendix \ref{WQ}), the Moyal product admits a power series expansion on $\hb$. In fact, the Moyal product was originally defined through that expansion (\cite{Gro},\cite{BFFLS}). The rigorous proof of equality between between the two expressions was established in \cite{EGV} (where the Moyal product is called twisted product). \\ \\ 
One of the main properties of the Moyal product is the identity
$$
\int_{\R^{2n}}f\star g(x,\xi)\de\xi\de x=\int_{\R^{2n}} fg(x,\xi)\de\xi\de x.
$$
The latter formula allow us to extend the definition of the Moyal product as follows: For each $T\in S'(\R^{2n})$ and $f\in S(\R^{2n})$, define $T\star f\in S'(\R^{2n})$ and $f\star T\in S'(\R^{2n})$ by
$$
T\star f(g)=T(f\star g),\qquad f\star T(g)=T(g\star f).
$$
However, we cannot define $T_1\star T_2$ for arbitrary $T_1,T_2\in S'(\R^{2n})$. In order to overcome the latter issue, it was introduced the so called Moyal multiplier algebra $\mathcal M$: we say that $T\in\mathcal M$ if $T\star f$ and $f\star T$ belong to $S(\R^n)$, for each $f\in S(\R^n)$. Therefore, for each $T_1,T_2\in\mathcal M$, we can define 
$$
T_1\star T_2(f)=T_1(T_2\star f).
$$
With the latter extension of the Moyal product and complex conjugation, $\mathcal M$ becomes a $\ast$-algebra. \\

The Corollary \ref{StoS} implies that it is always possible to define the compositions $\mathfrak{Op}(f)\mathfrak{Op}(g)$ and $\mathfrak{Op}(g)\mathfrak{Op}(f)$ for any pair of tempered constants of motion $f,g$. The latter suggest that it should be always possible to define the Moyal products $f\star g$ and $g\star f$. Indeed, we will prove that any tempered constant of motion is an element of the multiplier algebra.
\begin{Lemma}\label{LemaMoyal}
If $f\in S'(\mathbb{R}^{2n})$ and $g\in S(\R^{2n})$ are such that $f=\displaystyle\sum_{\alpha,\beta}c_{\alpha,\beta}\Phi^{\alpha,\beta}$ and $g=\displaystyle\sum_{\alpha,\beta}d_{\alpha,\beta}\Phi^{\alpha,\beta}$ then
\begin{equation}\label{moyalm}
\displaystyle f\star g=\sum_{\alpha,\beta}\left(\sum_{\alpha'}c_{\alpha,\alpha'}d_{\alpha',\beta}\right)\Phi^{\alpha,\beta}
\end{equation}
\end{Lemma}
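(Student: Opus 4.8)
The plan is to compute $f\star g$ by expanding both factors in the basis $\{\Phi^{\alpha,\beta}\}$ and using the bilinearity and continuity of the Moyal product together with the multiplication rule \eqref{wignermoyal}, namely $\Phi^{\alpha,\beta}\star\Phi^{\alpha',\beta'}=\delta_{\alpha',\beta}\Phi^{\alpha,\beta'}$. Formally,
\begin{equation*}
f\star g=\Bigl(\sum_{\alpha,\gamma}c_{\alpha,\gamma}\Phi^{\alpha,\gamma}\Bigr)\star\Bigl(\sum_{\alpha',\beta}d_{\alpha',\beta}\Phi^{\alpha',\beta}\Bigr)
=\sum_{\alpha,\gamma,\alpha',\beta}c_{\alpha,\gamma}d_{\alpha',\beta}\,\delta_{\alpha',\gamma}\,\Phi^{\alpha,\beta}
=\sum_{\alpha,\beta}\Bigl(\sum_{\gamma}c_{\alpha,\gamma}d_{\gamma,\beta}\Bigr)\Phi^{\alpha,\beta},
\end{equation*}
which is \eqref{moyalm} after renaming the summation index. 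The real work is to justify the interchange of the (possibly infinite) sums with the bilinear map $\star$, and to check that the inner sums $\sum_{\gamma}c_{\alpha,\gamma}d_{\gamma,\beta}$ and the outer series both converge in the appropriate topology.

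First I would set up the convergence bookkeeping. Since $g\in S(\R^{2n})$, Lemma \ref{Le2} (or Theorem \ref{CCM_Th}) gives that the coefficients $d_{\alpha,\beta}=\langle\Op(g)\phi_\alpha,\phi_\beta\rangle$ are rapidly decreasing: for every $m$, $\sup_{\alpha,\beta}|d_{\alpha,\beta}|(|\alpha|+|\beta|)^m<\infty$, and the series $\sum d_{\alpha,\beta}\Phi^{\alpha,\beta}$ converges in $S(\R^{2n})$. Since $f\in S'(\R^{2n})$, Lemma \ref{Le1} gives a polynomial bound $|c_{\alpha,\beta}|\le C((\alpha,\beta)+1)^\gamma$, and the series $\sum c_{\alpha,\beta}\Phi^{\alpha,\beta}$ converges in $S'(\R^{2n})$. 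Combining the polynomial growth of the $c$'s with the rapid decay of the $d$'s, each inner sum $\sum_\gamma c_{\alpha,\gamma}d_{\gamma,\beta}$ converges absolutely, and the resulting coefficients $e_{\alpha,\beta}:=\sum_\gamma c_{\alpha,\gamma}d_{\gamma,\beta}$ again satisfy a polynomial bound of the same type (the sum over $\gamma$ of a polynomially bounded times a rapidly decreasing sequence is controlled, uniformly after extracting a fixed power), so by Theorem \ref{CCM_Th} the right-hand side of \eqref{moyalm} defines an element of $S'(\R^{2n})$.

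To justify the interchange rigorously I would argue by continuity and density rather than by brute force. Recall $f\star g\in S'(\R^{2n})$ is defined by $(f\star g)(h)=f(g\star h)$ for $h\in S(\R^{2n})$; equivalently, testing against $\Phi^{\alpha',\beta'}$ and using \eqref{wignermoyal} together with the duality pairing $\langle\Phi^{\alpha,\beta},\overline{\Phi^{\alpha',\beta'}}\rangle$ from Proposition \ref{CCM_Pr} (orthogonality of the $\Phi$'s in $L^2$, suitably normalized), one reads off the coefficient of $\Phi^{\alpha,\beta}$ in $f\star g$ directly. Concretely: approximate $f$ by its partial sums $f_N=\sum_{|\alpha|,|\gamma|\le N}c_{\alpha,\gamma}\Phi^{\alpha,\gamma}$, which converge to $f$ in $S'(\R^{2n})$; each $f_N$ is a finite combination, so $f_N\star g=\sum_{\alpha,\gamma,\alpha',\beta}(\cdots)$ is a genuine finite-times-convergent manipulation and \eqref{moyalm} holds for $f_N$ by \eqref{wignermoyal} and the convergence of $g$ in $S(\R^{2n})$ (so that $\Phi^{\alpha,\gamma}\star g$ makes sense termwise). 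Then pass $N\to\infty$: on the left, $f_N\star g\to f\star g$ because $T\mapsto T\star g$ is continuous on $S'(\R^{2n})$ (it is the adjoint of $h\mapsto g\star h$ on $S$); on the right, the partial sums of $\sum_{\alpha,\beta}e_{\alpha,\beta}^{(N)}\Phi^{\alpha,\beta}$ converge in $S'(\R^{2n})$ to $\sum_{\alpha,\beta}e_{\alpha,\beta}\Phi^{\alpha,\beta}$, using the coefficient bounds established above and the $N$-representation characterization. Matching limits gives \eqref{moyalm}.

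The main obstacle I anticipate is precisely the uniformity in the convergence of the inner sums and the verification that $(e_{\alpha,\beta})$ lands in the correct growth class so that Theorem \ref{CCM_Th} applies — i.e. turning the heuristic ``polynomial $\times$ Schwartz $=$ controlled'' into an honest estimate with a single exponent $\gamma'$ valid for all $\alpha,\beta$. Everything else (the algebraic identity \eqref{wignermoyal}, continuity of $T\mapsto T\star g$, the density of partial sums) is either already in the excerpt or standard. Once \eqref{moyalm} is established, the stated corollary that tempered constants of motion form a subalgebra of the multiplier algebra $\mathcal M$ follows by taking $g$ itself to be a tempered constant of motion and invoking Theorem \ref{StoS} to see that both one-sided Moyal products with Schwartz functions stay in $S(\R^{2n})$.
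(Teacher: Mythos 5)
Your proposal is correct and takes essentially the same route as the paper's own (very terse) proof: the paper simply says the result follows from the multiplication rule \eqref{wignermoyal} together with two applications of the continuity of $\star$ with one factor fixed, which is exactly the skeleton of your argument. What you add is the explicit convergence bookkeeping — the polynomial bound on $c_{\alpha,\beta}$, the rapid decay of $d_{\alpha,\beta}$, the absolute convergence and growth class of the resulting coefficients $e_{\alpha,\beta}$, and the partial-sum/weak-$\ast$-limit mechanism — which the paper leaves implicit. Nothing is missing and nothing would fail; this is a more detailed write-up of the same idea.
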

\begin{proof}
This result is follows from formula \eqref{wignermoyal} and using twice the continuity of the Moyal product when one of the factors is fixed.
\end{proof}
\begin{Theorem}\label{MoyalMainT}
Every tempered constant of motion belongs to the multiplier algebra $\mathcal{M}$.
\end{Theorem}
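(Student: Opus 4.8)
The plan is to combine the coefficient characterization of tempered constants of motion (Theorem \ref{CCM_Th}, together with the estimates \eqref{1ine} and \eqref{2ine} from Lemma \ref{Le1}) with the formula for the Moyal product in Hermite--Wigner coordinates (Lemma \ref{LemaMoyal}). Let $f$ be a tempered constant of motion and write $f=\sum_{|\alpha|=|\beta|}c_{\alpha,\beta}\Phi^{\alpha,\beta}$ with $c_{\alpha,\beta}=\langle\Op(f)\phi_\alpha,\phi_\beta\rangle$, so that $|c_{\alpha,\beta}|\le C((\alpha,\beta)+1)^\gamma$ for some $C>0$ and $\gamma\in\mathbb{N}^{2n}$ by \eqref{1ine}. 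Fix $g\in S(\R^{2n})$ and write $g=\sum_{\alpha,\beta}d_{\alpha,\beta}\Phi^{\alpha,\beta}$ with $d_{\alpha,\beta}=\langle\Op(g)\phi_\alpha,\phi_\beta\rangle$; by \eqref{2ine} the family $(d_{\alpha,\beta})$ is rapidly decreasing, i.e. for each $M\in\mathbb{N}$ there is $C_M>0$ with $|d_{\alpha,\beta}|\le C_M(1+|\alpha|+|\beta|)^{-M}$. To prove $f\in\mathcal M$ it suffices to check that $f\star g$ and $g\star f$ lie in $S(\R^{2n})$, and by the $N$-representation theorem (used exactly as in the proof of Lemma \ref{Le1}, which characterizes $S(\R^{2n})$ by rapid decay of Hermite--Wigner coefficients) this amounts to showing that the coefficients of $f\star g$ and $g\star f$ are again rapidly decreasing.

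By Lemma \ref{LemaMoyal}, $f\star g=\sum_{\alpha,\beta}e_{\alpha,\beta}\Phi^{\alpha,\beta}$ with $e_{\alpha,\beta}=\sum_{\alpha'}c_{\alpha,\alpha'}d_{\alpha',\beta}$. The key observation is that $f$ being a constant of motion forces $c_{\alpha,\alpha'}=0$ unless $|\alpha'|=|\alpha|$, so the inner sum is finite and ranges over the $d_{|\alpha|}=\binom{n+|\alpha|-1}{|\alpha|}$ multi-indices $\alpha'$ with $|\alpha'|=|\alpha|$. On this range $((\alpha,\alpha')+1)^\gamma\le(1+|\alpha|)^{|\gamma|}$ and $|d_{\alpha',\beta}|\le C_M(1+|\alpha|+|\beta|)^{-M}$, hence
$$|e_{\alpha,\beta}|\le C\,C_M\,d_{|\alpha|}\,(1+|\alpha|)^{|\gamma|}(1+|\alpha|+|\beta|)^{-M}\le C'\,C_M\,(1+|\alpha|+|\beta|)^{|\gamma|+n-1-M},$$
where we used that $d_{|\alpha|}$ is a polynomial of degree $n-1$ in $|\alpha|$ and $1+|\alpha|\le 1+|\alpha|+|\beta|$. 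Since $M$ is arbitrary, $(e_{\alpha,\beta})$ decays faster than any polynomial, so $f\star g\in S(\R^{2n})$. The argument for $g\star f$ is symmetric: now $g\star f=\sum_{\alpha,\beta}\bigl(\sum_{\alpha'}d_{\alpha,\alpha'}c_{\alpha',\beta}\bigr)\Phi^{\alpha,\beta}$ and the inner sum is supported on $|\alpha'|=|\beta|$ because $c_{\alpha',\beta}=0$ otherwise, which yields the same estimate with the roles of $|\alpha|$ and $|\beta|$ exchanged. Therefore $f\in\mathcal M$.

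I do not expect a genuine obstacle here; the content is essentially bookkeeping. The only points requiring care are: that the two iterated uses of the continuity of the Moyal product in Lemma \ref{LemaMoyal} legitimately produce the stated coefficient formula (granted, since $f\in S'(\R^{2n})$ and $g\in S(\R^{2n})$); the clean formulation of the equivalence between rapid decay of Hermite--Wigner coefficients and membership in $S(\R^{2n})$, which is the $N$-representation theorem already invoked for Lemma \ref{Le1}; and the absorption of the dimension factor $d_{|\alpha|}$, which is harmless because it is polynomial in $|\alpha|$ and is dominated by taking $M$ large. Conceptually the whole proof rests on the single fact that the constant-of-motion condition collapses one of the two summations defining the Moyal product to a finite index set of only polynomially growing size, which is exactly what lets a Schwartz factor survive multiplication by a merely tempered one.
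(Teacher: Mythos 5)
Your proposal is correct and follows essentially the same route as the paper: expand $f$ and $g$ in the $\Phi^{\alpha,\beta}$ basis, apply Lemma \ref{LemaMoyal}, observe that the constant-of-motion condition restricts the inner sum to the finitely many $\alpha'$ with $|\alpha'|=|\alpha|$ (of size $d_{|\alpha|}$, polynomial of degree $n-1$), and then combine the polynomial bound \eqref{1ine} on the coefficients of $f$ with the rapid-decay bound \eqref{2ine} on those of $g$ to conclude rapid decay of the product's coefficients. Your estimate is written a bit more explicitly than the paper's (which factors out a supremum before summing), but the mathematical content is the same.
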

\begin{proof}
Let $f$ be a tempered constant of motion and $g\in S(\mathbb{R}^{2n})$. There exists coefficients $a_{\alpha,\beta}$ and $b_{\alpha, \beta}$ such that
$$f=\sum_{|\alpha|=|\beta|}a_{\alpha,\beta}\Phi^{\alpha,\beta}\mbox{ and }g=\sum_{\alpha',\beta'}b_{\alpha',\beta'}\Phi^{\alpha',\beta'}.$$
Using equation \eqref{moyalm}
\begin{eqnarray*}
f\star g&=&\sum_{\alpha,\beta}\left(\sum_{|\alpha'|=|\alpha|}a_{\alpha,\alpha'}b_{\alpha',\beta}\right)\Phi^{\alpha,\beta}.\\
\end{eqnarray*}
Since $g\in S(\R^{2n})$ and $d_{|\alpha|}$ is a polynomial on $|\alpha|$ of degree $n-1$, equations \eqref{1ine} and \eqref{2ine} imply that for all $m\in\mathbb{N}$
\begin{eqnarray*}
\sum_{|\alpha'|=|\alpha|}|a_{\alpha,\alpha'}b_{\alpha',\beta}|(|\alpha|+|\beta|)^{m}&\leq& \sum_{|\alpha'|=|\alpha|}\frac{|a_{\alpha,\alpha'}|}{d_{|\alpha|}(|\alpha|+1)^{|\gamma|}} \sup_{\alpha',\beta}\left(d_{|\alpha'|}(|\alpha'|+1)^{|\gamma|}(|\alpha'|+|\beta|)^{m}|b_{\alpha',\beta}|\right)\\
&\leq& C \sum_{|\alpha'|=|\alpha|}\frac{((\alpha,\alpha')+1)^{\gamma}}{d_{|\alpha|}(|\alpha|+1)^{|\gamma|}} \\
&\leq& C
\end{eqnarray*}
Therefore Lemma \ref{Le1} implies $f\star g\in S(\R^{2n})$. A similar argument proves that $g\star f\in S(\R^{2n})$.
\end{proof}
\begin{Corollary}\label{MoyalCor}
The space of tempered constants of motion of the Harmonic Oscillator form a $\ast$-algebra with the Moyal product. 
\end{Corollary}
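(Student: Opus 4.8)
The plan is to show that the set $\mathcal{A}':=\{\,f\in S'(\R^{2n}):\varphi_t^{*}f=f\ \text{for all }t\in\R\,\}$ of tempered constants of motion is a linear subspace of the Moyal multiplier algebra $\mathcal{M}$ which is stable under the Moyal product $\star$ and under complex conjugation; since $\mathcal{M}$ has already been recognized as a $\ast$-algebra for these operations, the corollary then follows immediately, the remaining algebraic axioms (associativity, distributivity, $(f\star g)^{*}=\overline g\star\overline f$) being inherited from $\mathcal{M}$. That $\mathcal{A}'$ is a linear subspace is clear, because $f\mapsto\varphi_t^{*}f$ is linear on $S'(\R^{2n})$ for each $t$, so the family of conditions $\varphi_t^{*}f=f$ cuts out a linear subspace; and the inclusion $\mathcal{A}'\subseteq\mathcal{M}$ is exactly Theorem \ref{MoyalMainT}. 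In particular, $f\star g$ and $g\star f$ are well-defined elements of $\mathcal{M}\subseteq S'(\R^{2n})$ whenever $f,g\in\mathcal{A}'$.

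The heart of the matter is stability under $\star$. Given $f,g\in\mathcal{A}'$, I would use Theorem \ref{CCM_Th} to write $f=\sum_{|\alpha|=|\beta|}a_{\alpha,\beta}\Phi^{\alpha,\beta}$ and $g=\sum_{|\alpha|=|\beta|}b_{\alpha,\beta}\Phi^{\alpha,\beta}$, and then invoke the coefficient identity already derived in the proof of Theorem \ref{MoyalMainT} (via \eqref{moyalm}):
\[
f\star g=\sum_{\alpha,\beta}\Bigl(\sum_{|\alpha'|=|\alpha|}a_{\alpha,\alpha'}\,b_{\alpha',\beta}\Bigr)\Phi^{\alpha,\beta},
\]
the series converging in $S'(\R^{2n})$. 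The inner sum forces $|\alpha'|=|\alpha|$ (through $a_{\alpha,\alpha'}$) and $|\alpha'|=|\beta|$ (through $b_{\alpha',\beta}$), so the coefficient of $\Phi^{\alpha,\beta}$ vanishes unless $|\alpha|=|\beta|$; hence $f\star g$ is a series supported on $\{|\alpha|=|\beta|\}$. Since each $\Phi^{\alpha,\beta}$ with $|\alpha|=|\beta|$ is itself a constant of motion (Proposition \ref{MP}) and $\varphi_t^{*}$ is continuous on $S'(\R^{2n})$, it follows that $\varphi_t^{*}(f\star g)=f\star g$, i.e. $f\star g\in\mathcal{A}'$. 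A more conceptual route is also available: by Theorems \ref{StoS} and \ref{PrCOM}, $\Op(f\star g)=\Op(f)\Op(g)$ is invariant under conjugation by $e^{itH_0}$, and the metaplectic covariance of Weyl quantization \eqref{metaW}, together with Proposition \ref{teo 1} and the injectivity of $\Op$ on $S'(\R^{2n})$, transports this invariance back to the symbol $f\star g$.

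Stability under conjugation is immediate: since $\varphi_t$ is a real linear map, the pullback $\varphi_t^{*}$ commutes with the complex conjugation of tempered distributions, so $\overline f\in\mathcal{A}'$ whenever $f\in\mathcal{A}'$.

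Combining the three points, $\mathcal{A}'$ is a $\ast$-subalgebra of $\mathcal{M}$, which is the assertion of the corollary. The only step that is not purely formal is the stability under $\star$: it is precisely there that the structure of constants of motion of the Harmonic Oscillator is used, either through the coefficient bookkeeping of Theorem \ref{MoyalMainT} (the support condition $|\alpha|=|\beta|$ being preserved under the ``matrix multiplication'' of the coefficients) or through the metaplectic covariance linking the symbol and operator pictures; I expect this to be the only point requiring genuine care, linearity and the behaviour under conjugation being routine.
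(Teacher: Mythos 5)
The paper states this Corollary without giving a proof, presenting it as an immediate consequence of Theorem \ref{MoyalMainT}. However, Theorem \ref{MoyalMainT} only shows that tempered constants of motion lie in the multiplier algebra $\mathcal{M}$, so that the Moyal product between two of them is \emph{defined}; it does not by itself show that the product is \emph{again} a tempered constant of motion. Your proposal correctly identifies this as the genuine content of the Corollary and supplies the argument: the coefficient expansion in the basis $\{\Phi^{\alpha,\beta}\}$ behaves like matrix multiplication, and the support condition $|\alpha|=|\beta|$ (block-diagonality) is preserved under such multiplication; and complex conjugation commutes with $\varphi_t^{*}$, which handles the $\ast$-operation. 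This is in the same spirit as the paper's surrounding arguments and fills in exactly what the authors left implicit.

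One technicality you should make explicit: Lemma \ref{LemaMoyal} and the coefficient identity \eqref{moyalm} are stated under the hypothesis that one factor lies in $S(\R^{2n})$, whereas for two tempered constants of motion both factors are merely distributions. You would need to extend \eqref{moyalm} to this setting, for instance by approximating one factor by its partial sums (which are Schwartz) and using the separate continuity of $\star:\mathcal{M}\times S'\to S'$, together with the uniqueness of the coefficients from Lemma \ref{Le2}. Your second, more conceptual route avoids this entirely: $\Op(f)$ and $\Op(g)$ map $S(\R^n)$ into itself by Theorem \ref{StoS} and both commute with $e^{itH_0}$ by Theorem \ref{PrCOM}, hence so does $\Op(f)\Op(g)=\Op(f\star g)$, and then equation \eqref{meta} with $S=\varphi_t$, together with the injectivity of $\Op$ on $S'(\R^{2n})$, gives $\varphi_t^{*}(f\star g)=f\star g$ directly. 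That second argument is clean and self-contained, and I would lead with it.
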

\section{Explicit computations for $\Phi^{\alpha,\beta}$ and $\langle\Op(f)\phi_{\alpha},\phi_{\beta}\rangle$.}
Expressions for $\Phi^{\alpha,\beta}$ can be found in the literature for the case $n=1$ for example in \cite{Fol}. The following result is a closed formula for Wigner transform of $n$ dimensional Hermite functions. 
\begin{Theorem}\label{W_Form}
If $|\alpha|=|\beta|$ we have the following formula
$$
\Phi^{\alpha,\beta}(x,\xi)=2^{n}\pi^{\frac{n}{2}}e^{-h_0(x,\xi)} m_{\alpha',\beta'}\left(\xi+2i\pi x\right)\prod_{k=1}^{n}P_{\alpha_k, \beta_k}\left(\frac{1}{\sqrt{2}}\left\|\xi_k+2i\pi x_k\right\|\right)
$$
with $m_{\alpha',\beta'}$ some monomial of the form \eqref{conmon} and $P_{\alpha_i,\beta_i}$ some polynomial. In consequence, the classical constants of motion of $h_0$ are generated by the monomials $m_{\alpha,\beta}$ and the functions of $\|x_k+i\xi_k\|$.
\end{Theorem}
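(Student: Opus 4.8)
The plan is to reduce the $n$-dimensional Wigner transform $\Phi^{\alpha,\beta}=W(\phi_\beta,\phi_\alpha)$ to a product of one-dimensional Wigner transforms and then invoke the classical closed formula for $W(\phi_{\beta_k,1},\phi_{\alpha_k,1})$ in terms of Laguerre polynomials. First I would recall that the Hermite function $\phi_{\alpha,n}$ factorizes as $\phi_{\alpha_1,1}(x_1)\cdots\phi_{\alpha_n,1}(x_n)$, and that the Wigner transform $W$ is multiplicative with respect to tensor products: if $u=u_1\otimes\cdots\otimes u_n$ and $v=v_1\otimes\cdots\otimes v_n$ then $W(u,v)(x,\xi)=\prod_{k=1}^n W(u_k,v_k)(x_k,\xi_k)$. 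This is immediate from the integral definition of $W$ since the defining integral over $\R^n$ splits as a product of integrals over $\R$. Hence $\Phi^{\alpha,\beta}(x,\xi)=\prod_{k=1}^n W(\phi_{\beta_k,1},\phi_{\alpha_k,1})(x_k,\xi_k)$, and the problem is now genuinely one-dimensional.

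Next I would substitute the known one-dimensional formula (see \cite{Fol}, Section 1.9, or Thangavelu): for $a\ge b$,
$$
W(\phi_{b,1},\phi_{a,1})(x,\xi)=c_{a,b}\,(2\pi)^{(a-b)/2}(\xi+2\pi i x)^{a-b}\,e^{-2\pi h_{0,1}(x,\xi)}L_b^{(a-b)}\!\left(2\pi(x^2+\xi^2)\right),
$$
where $L_b^{(a-b)}$ is the associated Laguerre polynomial, $h_{0,1}(x,\xi)=\tfrac12(x^2+\xi^2)$ up to the normalization constant fixed by the paper's conventions, and $c_{a,b}$ is an explicit constant; the case $a<b$ is obtained by complex conjugation, which replaces $(\xi+2\pi i x)^{a-b}$ by $(\xi-2\pi ix)^{b-a}=\overline{(\xi+2\pi ix)}^{\,b-a}$. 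Collecting the $n$ factors, the product of the exponentials gives $e^{-h_0(x,\xi)}$ (in the paper's normalization), the product of the monomial prefactors $(\xi_k\pm 2\pi i x_k)^{|\alpha_k-\beta_k|}$ assembles into a single monomial $m_{\alpha',\beta'}(\xi+2\pi i x)$ of the form \eqref{conmon} with $\alpha',\beta'$ recording the positive and negative parts of the differences $\alpha_k-\beta_k$, and the product of the Laguerre factors is exactly $\prod_{k=1}^n P_{\alpha_k,\beta_k}\!\left(\tfrac{1}{\sqrt2}\|\xi_k+2\pi i x_k\|\right)$ with $P_{\alpha_k,\beta_k}(r)=L^{(|\alpha_k-\beta_k|)}_{\min(\alpha_k,\beta_k)}(2\pi r^2)$ up to constants. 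Pulling the accumulated constants to the front yields the stated $2^n\pi^{n/2}$ (after matching the paper's normalization of the Hermite functions and of $W$ via the appendix), which establishes the displayed identity.

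For the final sentence, the point is that by Theorem \ref{CCM_Th} every tempered constant of motion is an $S'$-convergent series $\sum_{|\alpha|=|\beta|}c_{\alpha,\beta}\Phi^{\alpha,\beta}$, and the formula just derived exhibits each $\Phi^{\alpha,\beta}$ as $e^{-h_0}$ times a polynomial in the variables $\xi_k+2\pi i x_k$, $\xi_k-2\pi i x_k$ and $\|\xi_k+2\pi i x_k\|^2=\xi_k^2+x_k^2$; since $z_k\bar z_k=x_k^2+\xi_k^2$ is itself a function of $\|x_k+i\xi_k\|$ and $z_k=x_k+i\xi_k$ is (up to relabeling and an overall rotation absorbed into the flow) the monomial $m_{e_k,0}$, the constants of motion are generated, as claimed, by the monomials $m_{\alpha,\beta}$ together with the radial functions of the $\|x_k+i\xi_k\|$. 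The main obstacle I anticipate is purely bookkeeping: tracking the normalization constants through the paper's particular conventions for $\phi_{\alpha,1}$ (which carries an unusual $2^n$ in the denominator) and for the Weyl–Wigner transform in Appendix \ref{WQ}, so that the prefactor comes out exactly as $2^n\pi^{n/2}$; the structural content — tensor multiplicativity of $W$ plus the known one-dimensional Laguerre formula — is routine.
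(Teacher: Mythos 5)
Your proof takes the same overall route as the paper: factor the $n$-dimensional Wigner transform into a product of one-dimensional Wigner transforms of Hermite functions, substitute a known closed 1D formula, and assemble. The only substantive difference is the 1D formula used: the paper passes through the Fourier--Wigner transform $V$ and invokes the complex Hermite polynomial identity of \cite{FW}, factoring $H_{\alpha_k,\beta_k}(z_k)$ into a monomial part $O_{\alpha_k,\beta_k}^{|\alpha_k-\beta_k|}(z_k)$ times a radial polynomial $P_{\alpha_k,\beta_k}(\|z_k\|)$; you instead quote the classical Laguerre-polynomial form $W(\phi_{b,1},\phi_{a,1}) = c_{a,b}(\xi+2\pi ix)^{a-b}e^{-(\cdots)}L_b^{(a-b)}(\cdots)$. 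These are the same fact in two guises (the complex Hermite polynomials are exactly the Laguerre form multiplied by the power of $z$ or $\bar z$), so the structural content is identical; the Laguerre version is arguably the more widely cited one and makes the ``monomial $\times$ radial'' split self-evident, whereas the paper's route gives it as a stated factorization property. One caution on bookkeeping, which you already flag: the paper's Wigner transform in Appendix \ref{WQ} has no $2\pi$ in the exponent ($e^{-i\xi\cdot p}$), so the argument $\xi+2i\pi x$ appearing in the theorem statement does not match the $z_k = x_k+i\xi_k$ used in the paper's own proof; your computation should be carried out in the $z_k = x_k + i\xi_k$ normalization to be consistent with the appendix and with the identity $e^{-\|z\|^2/2}=e^{-h_0(x,\xi)}$, with the statement's $2\pi$ factors being either a typo or an artifact of a differently normalized source.
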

\begin{proof}
We are interested in a closed expression for $W(\phi_\alpha,\phi_\beta)$ with $|\alpha|=|\beta|$. Due to the integral definition of the Wigner transform and its relation to the Fourier Wigner Transform (Section 1.4. in \cite{Fol}) we get
$$W(\phi_\alpha,\phi_\beta)(x,\xi)=\prod_{k=1}^{n}W(\phi_{\alpha_i,1},\phi_{\beta_i,1})(x_i,\xi_i)=2^n(-1)^{|\alpha|}\prod_{k=1}^{n}V(\phi_{\alpha_k,1},\phi_{\beta_k,1})(x_k,\xi_k).$$ 
There are several expressions for $V(\phi_{\alpha_k,1},\phi_{\beta_i,1})(x_k,\xi_k)$, one is given in \cite{FW} in terms of complex Hermite polynomials. After some easy computations we find that
$$W(\phi_\alpha,\phi_\beta)(x,\xi)=2^{n+|\alpha|}\pi^{\frac{n}{2}}e^{-\frac{\|z\|^2}{2}} \prod_{k=1}^{n}H_{\alpha_k, \beta_k}(z_k)$$
The complex Hermite polynomials admit a factorization of the form $H_{\alpha_k,\beta_k}(z_k)=O_{\alpha_k,\beta_k}^{|\alpha_k-\beta_k|}(z_k)P_{\alpha_k, \beta_k}(\|z_k\|)$ where $P_{\alpha_k, \beta_k}$ is the polynomial
$$P_{\alpha_k, \beta_k}(\|z_k\|)=\sum_{s=0}^{\min(\alpha_k,\beta_k)}(-1)^s s! \binom{m}{s} \binom{n}{s} \|z_k\|^{\min(\alpha_k,\beta_k)-s}$$
and 
$$O_{\alpha_k,\beta_k}(z_k)=\begin{cases}
z_k&\alpha_k>\beta_k\\
\overline{z_k}&\alpha_k\leq\beta_k
\end{cases}$$ 
The latter implies for some indices $\alpha',\beta'\in\mathbb{N}^{n}$ with $|\alpha'|=|\beta'|$ we have
$$W(\phi_\alpha,\phi_\beta)(x,\xi)=2^{n+|\alpha|}\pi^{\frac{n}{2}}e^{-\frac{\|z\|^2}{2}} m_{\alpha',\beta'}(z)\prod_{i=1}^{n}P_{\alpha_i, \beta_i}(\|z_i\|)$$
\end{proof}

We shall look for an integral expression of $\langle\Op(f)\phi_\alpha,\phi_\beta\rangle$. With this purpose in mind, in what follows we assume that $f$ is a measurable polynomially bounded function so it defines a tempered distribution.  \\ \\
In order to follow the general perspective described in Section 2.1, we shall denote by $\C\mathbb{P}_\lb^{n-1}$ the projective plane obtained as the orbit space of the flow $\varphi_t$ restricted to the sphere $\mathbb{S}^{2n-1}_{\sqrt{2\lb}}$.  Recall that, since each $\varphi_t$ is an isometry, $\C\mathbb{P}^{n-1}_\lb$ admits a canonical Riemannian structure called the Fubini-Study metric, which by definition makes the projection $\pi:\mathbb{S}^{2n-1}_{\sqrt{2\lb}}\to \C\mathbb{P}^{n-1}_\lb$ a Riemannian submersion. The following identity is a direct consequence of proposition A.III.5 in \cite{BGM}.
\begin{Lemma}
Let $\nu^\lb_S$ the canonical volume form of the sphere $\mathbb{S}^{2n-1}_{\sqrt{2\lb}}$ and $\nu^\lb_F$ the Fubini-Study volume form of the projective plane $\C\mathbb{P}^{n-1}_\lb$. Also let $\nu^\lb_{[z]}$ the volume form of the orbit $[z]$ (a great circle on $\mathbb{S}^{2n-1}_{\sqrt{2\lb}}$) induced from the its canonical Riemannian structure. For any $g\in L^1(\mathbb{S}^{2n-1}_{\sqrt{2\lb}})$ we have that
$$
\int_{\mathbb{S}^{2n-1}_{\sqrt{2\lb}}}g(z)\nu^\lb_S(z)=\int_{\C\mathbb{P}^{n-1}_\lb}\left(\int_{[z]}g|_{[z]}\nu^\lb_{[z]}\right)\nu^\lb_F([z])
$$
\end{Lemma}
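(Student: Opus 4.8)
The plan is to recognize the identity as the specialization to the circle fibration $\pi:\mathbb{S}^{2n-1}_{\sqrt{2\lb}}\to\C\mathbb{P}^{n-1}_\lb$ of the general Fubini theorem for Riemannian submersions, namely Proposition A.III.5 in \cite{BGM}, so that the work reduces to verifying its hypotheses in our situation. First I would recall that, by the very definition of the Fubini--Study metric adopted above, $\pi$ is a Riemannian submersion: $d\pi_z$ restricted to the horizontal subspace $(\ker d\pi_z)^{\perp}$ is a linear isometry onto $T_{[z]}\C\mathbb{P}^{n-1}_\lb$. Its fibers are exactly the orbits $[z]$, which are great circles, hence compact; and since each $\varphi_t$ is an isometry of the sphere, all fibers are mutually isometric, each of length $2\pi\sqrt{2\lb}$. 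Finally, the metric that $[z]$ inherits as a submanifold of $\mathbb{S}^{2n-1}_{\sqrt{2\lb}}$ is by construction the restriction of the sphere metric to the vectors tangent to $[z]$, that is to the vertical distribution $\ker d\pi$; hence the volume form $\nu^\lb_{[z]}$ is precisely the vertical volume form entering the decomposition used in \cite{BGM}.

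For completeness I would include a one-paragraph reminder of why Proposition A.III.5 holds. Around any point one chooses a local trivialization of $\pi$ adapted to the submersion, in which the metric splits pointwise as the orthogonal direct sum of the pullback $\pi^{\ast}g_F$ of the Fubini--Study metric and the vertical metric $g_{[z]}$; consequently the Riemannian volume form factorizes, in these coordinates, as $\nu^\lb_S=\pi^{\ast}(\nu^\lb_F)\wedge\nu^\lb_{[z]}$ (up to sign), and the classical Fubini theorem applied on the product yields the stated identity whenever $g$ is continuous -- and here everything in sight is compact. One then passes to an arbitrary $g\in L^{1}(\mathbb{S}^{2n-1}_{\sqrt{2\lb}})$ by approximation: the sphere has finite total volume, so continuous functions are dense in $L^{1}$; Fubini's theorem guarantees that $[z]\mapsto\int_{[z]}g|_{[z]}\,\nu^\lb_{[z]}$ is $\nu^\lb_F$-measurable and finite for $\nu^\lb_F$-almost every $[z]$, and Tonelli's theorem for nonnegative $g$ followed by a decomposition into positive and negative parts upgrades the identity from the dense subclass to all of $L^{1}$.

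I do not expect a genuine obstacle here: the analytic content is entirely contained in \cite{BGM}. The only matters needing a little care are bookkeeping ones -- matching the density/orientation conventions so that no spurious sign survives (harmless, since one integrates functions against volume densities), and confirming that the ``volume form of the orbit induced from its canonical Riemannian structure'' is literally the vertical volume form of the submersion, which is immediate because the vertical vectors at a point of $[z]$ are exactly the vectors tangent to $[z]$. As an alternative I could bypass \cite{BGM} and establish the factorization by hand in explicit Hopf-type coordinates on $\mathbb{S}^{2n-1}_{\sqrt{2\lb}}$, writing out $\nu^\lb_S$, $\nu^\lb_F$ and $\nu^\lb_{[z]}$ directly; but invoking the submersion Fubini theorem is cleaner and is the route I would take.
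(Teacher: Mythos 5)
Your proposal is correct and follows exactly the paper's route: the paper also obtains the lemma as a direct consequence of Proposition A.III.5 in \cite{BGM}, using that $\pi:\mathbb{S}^{2n-1}_{\sqrt{2\lb}}\to\C\mathbb{P}^{n-1}_\lb$ is a Riemannian submersion with circle fibers. Your extra verification of the hypotheses and the $L^1$ approximation step simply fill in details the paper leaves to the reader.
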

\begin{Theorem}\label{IntForm}
Let $f\in\A$. If $|\alpha|=|\beta|$, then 
$$
<\Op(f)\phi_\alpha,\phi_\beta>=2\pi\int_0^\infty \left(\int_{\mathbb{C}\mathbb{P}^{n-1}_\lb}f_\lb\Phi^{\beta,\alpha}_\lb\nu^\lb_F\right)\de\lb.
$$
\end{Theorem}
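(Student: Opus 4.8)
The plan is to rewrite the matrix element as an honest integral of $f$ against a Wigner function over $\R^{2n}$, and then to integrate in the ``polar coordinates'' of $\R^{2n}$ adapted to $h_0$: slicing first by the energy spheres $\mathbb{S}^{2n-1}_{\sqrt{2\lb}}=h_0^{-1}(\lb)$ and then by the orbits of $\varphi$. First I would record that, since $f$ is a real polynomially bounded measurable function and $\Phi^{\beta,\alpha}=W(\phi_\alpha,\phi_\beta)$ is a Schwartz function, the product $f\,\Phi^{\beta,\alpha}$ is integrable on $\R^{2n}$; identity \eqref{WeylWigner} (equivalently, the $L^2$-unitarity of $\Op$ onto the Hilbert--Schmidt class together with $\Op(\Phi^{\alpha,\beta})=P_{\alpha,\beta}$ from Proposition \ref{MP}), combined with $\overline{\Phi^{\alpha,\beta}}=\Phi^{\beta,\alpha}$, gives
$$
\langle\Op(f)\phi_\alpha,\phi_\beta\rangle=\int_{\R^{2n}}f(x,\xi)\,\Phi^{\beta,\alpha}(x,\xi)\,\de x\,\de\xi .
$$

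Next I would apply the co-area formula for $h_0$ (ordinary polar coordinates in $\R^{2n}$ followed by the substitution $r=\sqrt{2\lb}$, using $|\nabla h_0|=\sqrt{2\lb}$ on $\mathbb{S}^{2n-1}_{\sqrt{2\lb}}$) to obtain, for any $g\in L^1(\R^{2n})$,
$$
\int_{\R^{2n}}g=\int_0^\infty\frac{1}{\sqrt{2\lb}}\left(\int_{\mathbb{S}^{2n-1}_{\sqrt{2\lb}}}g\,\nu^\lb_S\right)\de\lb .
$$
Taking $g=f\,\Phi^{\beta,\alpha}$ and feeding the inner integral into the Lemma above (the fibration identity from \cite{BGM}) disintegrates it over $\C\mathbb{P}^{n-1}_\lb$ and the orbits $[z]$. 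Here is the one conceptual point: because $f\in\A$ and, since $|\alpha|=|\beta|$, $\Phi^{\beta,\alpha}$ is also a constant of motion by Proposition \ref{MP}, both functions are constant on each orbit $[z]$ and agree there with $f_\lb([z])$ and $\Phi^{\beta,\alpha}_\lb([z])$ in the sense of \eqref{lb}. Hence the innermost integral factors as $f_\lb([z])\,\Phi^{\beta,\alpha}_\lb([z])\int_{[z]}\nu^\lb_{[z]}$, and $\int_{[z]}\nu^\lb_{[z]}$ is simply the length of the great circle $\{\varphi_t z:t\in\mathbb{T}\}$, namely $2\pi\sqrt{2\lb}$.

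Assembling the three displays, the circle length $2\pi\sqrt{2\lb}$ cancels the Jacobian $1/\sqrt{2\lb}$ and leaves exactly
$$
\langle\Op(f)\phi_\alpha,\phi_\beta\rangle=2\pi\int_0^\infty\left(\int_{\C\mathbb{P}^{n-1}_\lb}f_\lb\,\Phi^{\beta,\alpha}_\lb\,\nu^\lb_F\right)\de\lb ,
$$
which is the asserted formula. This argument is essentially bookkeeping of measures; the only places that demand attention are (i) matching the normalisation in \eqref{WeylWigner} so that no stray power of $2\pi$ is introduced, and (ii) justifying the use of Fubini and of the co-area/fibration identities, which is immediate since $\Phi^{\beta,\alpha}$ is Schwartz and $f$ is polynomially bounded, so every integrand lies in $L^1$ of the relevant space. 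I do not foresee a substantive obstacle: the real content — that both $f$ and $\Phi^{\beta,\alpha}$ descend to functions on $\C\mathbb{P}^{n-1}_\lb$ precisely when $|\alpha|=|\beta|$ — is already supplied by Proposition \ref{MP}.
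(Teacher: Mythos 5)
Your proposal is correct and follows essentially the same route as the paper: use \eqref{WeylWigner} to express the matrix element as $\int_{\R^{2n}} f\,\Phi^{\beta,\alpha}$, apply the coarea formula to slice by energy spheres, and then use the Riemannian-submersion lemma together with the orbit-invariance of $f$ and $\Phi^{\beta,\alpha}$ to collapse the circle integrals to a factor of $2\pi\sqrt{2\lambda}$, which cancels the coarea Jacobian. The only cosmetic difference is that you spell out the justification of the first identity (via $\overline{\Phi^{\alpha,\beta}}=\Phi^{\beta,\alpha}$ and the pairing convention) and the $L^1$ hypotheses for Fubini/coarea somewhat more explicitly than the paper does.
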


\begin{proof}
Using coarea formula \cite{Fe, Fe0,Si} we have that
$$
\langle\Op(f)\phi_\alpha,\phi_\beta \rangle=\int_{\R^{2n}}f(x,\xi) \Phi_{\beta,\alpha}(x,\xi)\de x\de\xi= \int_0^\infty\left(\int_{\mathbb{S}^{2n-1}_{\sqrt{2\lb}}} f(z)\Phi^{\beta,\alpha}( z)\nu^\lb_S(z)\right)\frac{1}{\sqrt{2\lb}}\de\lb.
$$
Since $f\Phi^{\beta,\alpha}$ is constant on each orbit, the previous lemma implies that 
$$
\int_{\mathbb{S}^{2n-1}_{\sqrt{2\lb}}} f(z)\Phi^{\beta,\alpha}( z)\nu^\lb_S(z)=2\pi\sqrt{2\lb}\int_{\C\mathbb{P}^{n-1}_\lb} f_\lb([z])\Phi_\lb^{\beta,\alpha}([z])\nu^\lb_F([z])
$$
and this finish the proof.
\end{proof}

\begin{Remark}
{\rm
Since all the projective planes are diffeomorphic, we can replace the integration over $\mathbb{C}\mathbb{P}^{n-1}_\lb$ by integration over a single projective plane, but we would need to include a suitable Jacobian. Indeed, we have that
\begin{equation}\label{ie}
<\Op(f)\phi_\alpha,\phi_\beta>=2^n\pi\int_0^\infty \left(\int_{\mathbb{C}\mathbb{P}^{n-1}}f(\lb z)\Phi^{\beta,\alpha}(\lb z)\nu_F([z])\right)\lb^{n-1}\de\lb.
\end{equation}
}  
\end{Remark}
\begin{Remark}
{\rm
Let $B$ be the Bargmann transform. Then $B(\H_k)$ is the space of homogeneous polynomials of degree $k$. Thus, $B(\H_k)$ can be identified with the Hilbert space $\Gamma^0(\C\mathbb{P}^{n-1},\mathcal O(k))$ of global holomorphic Sections of the vector bundle $\mathcal{O}(k)\to\C\mathbb{P}^{n-1}$. Since the operator $B\Op(f)B^*$ maps $\Gamma^0(\C\mathbb{P}^{n-1},\mathcal O(k))$ into itself, $\langle\Op(f)\phi_\alpha,\phi_\beta\rangle=\langle B\Op(f)B^*(B\phi_\alpha), B\phi_\beta\rangle$ can be expressed as an integral over $\C\mathbb{P}^{n-1}$, somehow exchanging the double integral \eqref{ie} in the previous remark. 
}
\end{Remark}
\appendix
\section{Weyl Quantization.}\label{WQ}
In this Appendix we recall the definition of Weyl quantization and some of its main features. \\ \\
Weyl quantization \cite{We, Gro} (or Weyl calculus) is a map meant to transform real functions in the canonical phase space $\R^{2n}$ (classical observables) into selfadjoint operators on $L^2(\R^n)$ (quantum observables) in a physically meaningful manner, that is, taking into account the analogies between the descriptions of classical and quantum mechanics. There are several approaches to introduce Weyl quantization, for instance \cite{Z,Ta, Shu}, but we will mainly follow \cite{Fol}. We also recommend \cite{TE1} for a quite complete review of quantization theory. \\ \\
Formally, for certain function $f$ on phase space, we define the operator $\Op_\hb(f)$ acting on $L^2(\R^n)$ given by
\begin{equation}\label{We}
\Op_\hb(f)u(x)=\int_{\R^n}\int_{\R^n}f\left(\frac{x+y}{2},\xi\right)e^{\frac{i}{\hb}(x-y)\cdot \xi}u(y)\de\xi\de y, 
\end{equation}
where $\hb$ is a positive parameter interpreted as Planck's constant. In this article, we will not need to consider the roll played by $\hb$, so we will take $\hb=1$ and $\Op:=\Op_1$.\\ \\
The meaning of the expression \eqref{We} changes depending on the class of functions we are considering. More precisely, if $f\in S'(\R^{2n})$, then $\Op(f):S(\R^n)\to S'(\R^n)$, so \eqref{We} defines a tempered distribution (in particular, the evaluation on $x$ is not well defined). The standard way to rigorously define $\Op(f)$ is the following: Notice that, the kernel of $\Op(f)$ should be
\begin{equation}\label{Ker}
K_f(x,y)=(I\otimes \mathcal F) f\left(\frac{x+y}{2},y-x\right)=\int_{\R^n}f\left(\frac{x+y}{2},\xi\right)e^{i(x-y)\cdot \xi}\de\xi,
\end{equation}
where $(I\otimes \mathcal F)$ is the Fourier transform in the momentum variable. The change of variables $(x,y)\to \left(\frac{x+y}{2},y-x\right)$ is linear, injective and preserves measure, thus composing with this map send $S(\R^{2n})$ to itself continuously and it can be extended  to $S'(\R^{2n})$. Since the $(I\otimes \mathcal F)$ is well defined on $S'(\R^{2n})$, we can define $K_f$ for $f\in S'(\R^{2n})$ using the middle expression in \eqref{Ker}. In fact, the map $f\to K_f$ is an automorphism of the locally convex space $S'(\R^{2n})$. Finally, for $u,v\in S(\R^n)$, we define
$$
\Op(f)u(v)= K_f(v\otimes u) 
$$  

A fundamental object in the description of Weyl quantization is the the so called Wigner transform $W:S(\R^n)\times S(\R^n)\to S(\R^{2n})$ defined by 

$$
W(u,v)=\int_{\R^n}e^{-i\xi\cdot p}u\left(x+\frac{p}{2}\right)\overline{v\left(x-\frac{p}{2}\right)}\de p
$$  
 
The Wigner transform can be regarded as the restriction to functions of two variables of the form $g(x,y)=u(x)\overline{v(y)}$ of the map $\tilde W$ defined by
$$
\tilde W g(x,\xi)=\int_{\R^n}e^{-i\xi \cdot p}g\left(x+\frac{p}{2}, x-\frac{p}{2}\right).
$$ 
Using the arguments in the definition of the kernel $K_f$, we can define $\tilde W$ on $S'(\R^{2n})$; in fact, $\tilde W$ is the inverse of the map $f\to K_f$ and $\tilde W$ is unitary on $L^2(\R^{2n})$. In particular, we have that
\begin{equation}\label{WeylWigner}
\langle\Op(f)u,v\rangle=\langle f,W(v,u)\rangle. 
\end{equation}
For instance, if $f$ is polynomially bounded, then
$$
\langle Op(f)u,v\rangle =\int_{\R^{2n}}f\, W(u,v).
$$
One of the main properties of Weyl quantization is its relation with the so called metaplectic representation. Let $Sp(n)$ be the real symplectic group, i.e. the group formed by all the linear and symplectic maps $S:\R^{2n}\to\R^{2n}$. There is a map $\mu: Sp(n)\to\mathcal U(L^2(\R^n))$, characterized up to a $\pm 1$ factor by the identity:
$$
\rho(S(q,p))=\mu(S)\rho(q,p)\mu(S)^{-1},
$$
where $(q,p)\in\mathbb{H}_n$ is an element of Heisenberg group and $\rho$ is the Schrödinger representation. The map $\mu$ is called the metaplectic representation (see \cite{Fol} for details). As consequences, we have the very important (and beautiful) identities
\begin{equation}\label{metaW}
W(\mu(S)\phi,\mu(S)\psi)=W(\phi,\psi)\circ S^*
\end{equation}
\begin{equation}\label{meta}
\Op(f\circ S^*)=\mu(S)\Op(f)\mu(S)^{-1}
\end{equation}
The metaplectic representation is not a true representation of $Sp(n)$, because in general we only have that $\mu(ST)=\pm\mu(S)\mu(T)$, so it defines a representation of the double covering group $Sp_2(n)$ of $Sp(n)$. However, restricted to the complex unitary group $U(n)$, $\mu$ does define a representation (notice that the Hamiltonian flow of the Harmonic Oscillator belongs to $U(n)$).

\end{document}